\documentclass[12pt]{article}
\usepackage{e-jc}
\usepackage[colorlinks=true,citecolor=black,linkcolor=black,urlcolor=blue]{hyperref}
\usepackage{
	amsmath,
	amssymb,
	amsthm,
	array,
	caption,
	chngpage,
	float,
	graphicx,
	mathrsfs,
	multicol,
	multirow,
	numprint,
	rotating,
	sectsty,
	spreadtab,
	subcaption,
	tikz,
	url,
	verbatim,
	wrapfig
}
\usetikzlibrary{automata, positioning}

\newcommand{\eps}{\varepsilon}
\newcommand{\squarefree}{square-free}
\newcommand{\cubefree}{cube-free}
\newcommand{\Squarefree}{Square-free}
\newcommand{\Cubefree}{Cube-free}
\newcommand{\abs}[1]{\lvert#1\rvert}
\newcommand{\myUrl}[1]{
	\begin{center}
		{\small\url{#1}} 
	\end{center}
}

\theoremstyle{plain}
\newtheorem{thm}{Theorem}
\newtheorem{cor}[thm]{Corollary}
\newtheorem{lem}[thm]{Lemma}

\theoremstyle{definition}
\newtheorem{df}[thm]{Definition}
\newtheorem{exa}[thm]{Example}
\newtheorem{con}[thm]{Conjecture}

\theoremstyle{remark}
\newtheorem{rem}[thm]{Remark}

\DeclareMathOperator{\first}{first}
\DeclareMathOperator{\last}{last}
\title{
	Nondeterministic automatic complexity of overlap-free and almost {\squarefree} words
}

\author{Kayleigh K. Hyde\\
\small Schmid College of Science \& Technology\\[-0.8ex]
\small Chapman University\\[-0.8ex] 
\small Orange, California, U.S.A.\\
\small\tt khyde@chapman.edu\\
\and
Bj{\o}rn Kjos-Hanssen\thanks{This work was partially supported by a grant from the Simons Foundation (\#315188 to Bj\o rn Kjos-Hanssen).}\\
\small Department of Mathematics\\[-0.8ex]
\small University of Hawai\textquoteleft i at M\=anoa\\[-0.8ex]
\small Honolulu, Hawai\textquoteleft i, U.S.A.\\
\small\tt bjoernkh@hawaii.edu\\
}

\date{
	\dateline{November 24, 2014}{July 29, 2015}\\
	\small Mathematics Subject Classifications: 68R15, 68Q30
}

\begin{document}
	\maketitle{}
	\begin{abstract}
		Shallit and Wang studied deterministic automatic complexity of words. 
		They showed that the automatic Hausdorff dimension $I(\mathbf t)$ of the infinite Thue word satisfies $1/3\le I(\mathbf t)\le 1/2$.
		We improve that result by showing that $I(\mathbf t)= 1/2$.
		We prove that the nondeterministic automatic complexity $A_N(x)$ of a word $x$ of length $n$ is bounded by $b(n):=\lfloor n/2\rfloor + 1$.
		This enables us to define the complexity deficiency $D(x)=b(n)-A_N(x)$.
		If $x$ is square-free then $D(x)=0$. If $x$ is almost square-free in the sense of Fraenkel and Simpson,
		or if $x$ is a overlap-free binary word such as the infinite Thue--Morse word, then $D(x)\le 1$.
		On the other hand, there is no constant upper bound on $D$ for overlap-free words over a ternary alphabet,
		nor for cube-free words over a binary alphabet.

		The decision problem whether $D(x)\ge d$ for given $x$, $d$ belongs to $\mathrm{NP}\cap \mathrm{E}$.
	\end{abstract}

	\section{Introduction}
		The Kolmogorov complexity of a finite word $w$ is, roughly speaking,
		the length of the shortest description $w^*$ of $w$ in a fixed formal language.
		The description $w^*$ can be thought of as an optimally compressed version of $w$.
		Motivated by the non-computability of Kolmogorov complexity,
		Shallit and Wang \cite{MR1897300} studied a deterministic finite automaton analogue.
		A more recent approach is due to Calude, Salomaa, and Roblot \cite{Calude}.

		\begin{df}[Shallit and Wang \cite{MR1897300}]
			The \emph{automatic complexity} of a finite binary string \(x=x_1\cdots x_n\) is 
			the least number \(A_D(x)\) of states of a {deterministic finite automaton} \(M\) such that 
			\(x\) is the only string of length \(n\) in the language accepted by \(M\).
		\end{df}
		This complexity notion has the following two properties:
		\begin{enumerate}
			\item{} Most of the relevant automata end up
				having a ``dead state'' whose sole purpose is to absorb any irrelevant or
				unacceptable transitions.
			\item{} The complexity of a string can be changed by reversing it. For instance,
				\begin{equation}\label{eq1}
					A_D(011100) = 4 < 5 = A_D(001110).
				\end{equation}
			Equation \ref{eq1} was verified by a computer program; for the idea and a partial proof see Figure \ref{referee}.
			The anonymous referee of this article raised the question, which we have not been able to answer,
			whether the complexity of a string and its reverse can be arbitrarily far apart.
		\end{enumerate}

		\begin{figure}
			\begin{center}
				\begin{tikzpicture}[shorten >=1pt,node distance=1.5cm,on grid,auto]
					\node[state,initial]   (q_0)                {$q_0$};
					\node[state,accepting] (q_1) [right of=q_0] {$q_1$};
					\node[state]           (q_2) [right of=q_1] {$q_2$};
					\path[->] 
						(q_0) edge                node {$0$} (q_1)
						(q_1) edge                node {$1$} (q_2)
						(q_2) edge [loop above]   node {$1$} ()
						(q_2) edge [bend left=70] node {$0$} (q_0);
				\end{tikzpicture}
			\end{center}
			\caption{A witnessing automaton for the inequality $A_D(011100)\le 4$.
				All missing transitions go to a dead state $q_3$ which is not shown.
			}\label{referee}
		\end{figure}

		If we replace {deterministic finite automata} by {nondeterministic} ones, these properties disappear.
		The nondeterministic automatic complexity turns out to have other pleasant properties, such as a sharp linear upper bound.
		\paragraph{Technical ideas and results.} In this paper we develop some of the properties of nondeterministic automatic complexity.
		As a corollary we get a strengthening of a result of Shallit and Wang \cite{MR1897300}
		on the complexity of the infinite Thue--Morse word $\mathbf t$.
		Moreover, viewed through an NFA lens we can, in a sense, characterize the complexity of $\mathbf t$ exactly.
		A main technical idea is to extend \cite[Theorem 9]{MR1897300} which said that
		not only do squares, cubes and higher powers of a word have low complexity,
		but a word completely free of such powers must conversely have high complexity.
		The way we strengthen their results is by considering a variation
		on square-freeness and cube-freeness, \emph{overlap-freeness}.
		This notion also goes by the names of \emph{irreducibility} and \emph{strong cube-freeness} in the combinatorial literature.
		We also take up an idea from \cite[Theorem 8]{MR1897300} and use it to show that
		the natural decision problem associated with nondeterministic automatic complexity is in E = DTIME($2^{O(n)}$).
		This result is a theoretical complement to the practical fact that
		the nondeterministic automatic complexity can be computed reasonably quickly;
		to see it in action, for strings of length up to 23
		one can view automaton witnesses and check complexity using the following URL format
			\myUrl{http://math.hawaii.edu/wordpress/bjoern/complexity-of-110101101/}
		and check one's comprehension by playing a Complexity Guessing Game at
			\myUrl{http://math.hawaii.edu/wordpress/bjoern/software/web/complexity-guessing-game/}
		Let us now define our central notion and get started on developing its properties.
		Recall that a nondeterministic finite automaton (NFA) is assumed to have no $\varepsilon$-transitions, i.e.,
		it is not an $\mathrm{NFA}-\varepsilon$.
		\begin{df}\label{precise}
			The nondeterministic automatic complexity $A_N(w)$ of a word $w$ is the minimum number of states of an NFA $M$ accepting $w$
			such that there is only one accepting path in $M$ of length $\abs{w}$.
		\end{df}
		The minimum complexity $A_N(w)=1$ is only achieved by words of the form $a^n$ where $a$ is a single letter.
		\begin{thm}[Hyde \cite{Hyde}]\label{Hyde}
			The nondeterministic automatic complexity $A_N(x)$ of a string $x$ of length $n$ satisfies
			\[
				A_N(x) \le b(n):={\lfloor} n/2 {\rfloor} + 1\mathrm{.}
			\]
		\end{thm}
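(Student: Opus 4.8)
The plan is to prove the bound by exhibiting, for every word $x=x_1\cdots x_n$, an explicit NFA $M_x$ with exactly $\lfloor n/2\rfloor+1$ states that accepts $x$ and admits a unique accepting path of length $n$ in the sense of Definition \ref{precise}. The guiding heuristic is that each state beyond the initial one should let the automaton consume two letters of $x$ rather than one; this is what improves the trivial bound $n+1$, coming from a simple path, to $\lfloor n/2\rfloor+1$. The floor and the additive constant suggest treating even and odd $n$ separately, with the single leftover middle letter in the odd case absorbed by a self-loop (or one extra transition) at the ``turning'' state.

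Concretely, I would lay down a backbone $q_0,q_1,\ldots,q_m$ with $m=\lfloor n/2\rfloor$, read an initial segment of $x$ along forward transitions $q_{i-1}\to q_i$, and then consume the remaining letters by return transitions among the same states, choosing the transitions so that the designed path visits the states in a prescribed order and spells exactly $x$. The label of the $i$-th transition traversed is set to $x_i$. Where a transition is traversed more than once the corresponding letters of $x$ must agree, so part of the bookkeeping is to route the path so that any repeated edge falls on equal letters of $x$. Granting this, acceptance of $x$ and the exact state count are immediate.

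The main obstacle is proving that the accepting path of length $n$ is unique, i.e.\ that no other walk of length $n$ from the initial state to a final state exists. This is a purely graph-theoretic statement about the backbone, independent of the labels, and it is genuinely delicate: the naive ``folded'' automaton that reads the first half forward and then retraces it backward fails, since it admits oscillating spurious walks such as $q_0\to q_1\to q_0\to q_1\to\cdots$ of the correct length $n$, each of which is itself an accepting path. I therefore expect the real content of the proof to lie in arranging the transitions so that every length-$n$ walk is forced to coincide with the intended one --- for instance by keeping the out-degrees along the backbone under control, by making the final state a sink that can only be entered at the last step, or by an induction on $n$ in which two letters and one state are appended while the unique-path property is carried along as the inductive invariant. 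Once uniqueness is established for the chosen backbone the theorem follows, and I would separately check the base cases $n=1,2$ by hand to anchor the argument.
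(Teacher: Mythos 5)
You have identified exactly the paper's construction (Figure \ref{fig1}): a backbone of $\lfloor n/2\rfloor+1$ states with forward edges reading the first half of $x$, a self-loop at the turning state, backward edges reading the rest, and the initial state accepting. But your proposal stops where the actual proof must start: you correctly name uniqueness of the length-$n$ accepting path as ``the real content,'' and then offer only a menu of unexecuted strategies (sink states, out-degree control, induction) without carrying any of them out. The missing idea is a short parity argument. For odd $n=2m+1$, every non-loop edge of the automaton in Figure \ref{fig1} moves one step along the path $q_1,\ldots,q_{m+1}$, so any closed walk at $q_1$ uses an even number of non-loop steps; a walk of odd length $2m+1$ must therefore traverse the self-loop at $q_{m+1}$ an odd number of times, hence must reach $q_{m+1}$, which costs at least $m$ steps in each direction. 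This forces exactly $2m$ non-loop steps (monotone out, monotone back) and exactly one loop traversal, so the length-$n$ walk from $q_1$ to $q_1$ is unique as a path in the digraph --- and since an accepting path in the sense of Definition \ref{precise} is just such a walk, uniqueness holds independently of the labels, as you anticipated.

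Your stated obstruction is also half wrong, and this matters because it is what makes you doubt the very automaton you propose. You claim the folded automaton ``fails'' because of oscillating walks $q_0\to q_1\to q_0\to\cdots$ of the correct length; by the parity observation above such oscillations have even length, so they can never be accepting when $n$ is odd --- the odd case needs no repair at all. The oscillation problem is real only for even $n$ if one naively folds with accept state equal to the initial state and no self-loop; the paper's ``slightly modified automaton'' fixes this by spending the first letter on an edge $q_1\to q_2$ where $q_1$ has no incoming edge, making $q_2$ the accept state and reducing to the odd construction on $x_2\cdots x_n$. Finally, your bookkeeping about routing the path so that ``any repeated edge falls on equal letters of $x$'' is vacuous here: in the intended path every transition is traversed exactly once (forward edges, the self-loop, and backward edges are pairwise distinct transitions), which is precisely why the labels can simply be read off from $x$ in order. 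With the parity lemma and the even-case modification supplied, your outline becomes the paper's proof; without them it remains a plan.
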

		\begin{proof}[Proof sketch.]
			If $x$ has odd length, it suffices to carefully consider the automaton in Figure \ref{fig1}.
			If $x$ has even length, a slightly modified automaton can be used.
		\end{proof}
		\begin{figure}[h]
			\begin{tikzpicture}[shorten >=1pt,node distance=1.5cm,on grid,auto]
				\node[state,initial, accepting] (q_1)   {$q_1$}; 
				\node[state] (q_2)     [right=of q_1   ] {$q_2$}; 
				\node[state] (q_3)     [right=of q_2   ] {$q_3$}; 
				\node[state] (q_4)     [right=of q_3   ] {$q_4$};
				\node        (q_dots)  [right=of q_4   ] {$\dots$};
				\node[state] (q_m)     [right=of q_dots] {$q_m$};
				\node[state] (q_{m+1}) [right=of q_m   ] {$q_{m+1}$}; 
				\path[->] 
					(q_1)     edge [bend left]  node           {$x_1$}     (q_2)
					(q_2)     edge [bend left]  node           {$x_2$}     (q_3)
					(q_3)     edge [bend left]  node           {$x_3$}     (q_4)
					(q_4)     edge [bend left]  node [pos=.45] {$x_4$}     (q_dots)
					(q_dots)  edge [bend left]  node [pos=.6]  {$x_{m-1}$} (q_m)
					(q_m)     edge [bend left]  node [pos=.56] {$x_m$}     (q_{m+1})
					(q_{m+1}) edge [loop above] node           {$x_{m+1}$} ()
					(q_{m+1}) edge [bend left]  node [pos=.45] {$x_{m+2}$} (q_m)
					(q_m)     edge [bend left]  node [pos=.4]  {$x_{m+3}$} (q_dots)
					(q_dots)  edge [bend left]  node [pos=.6]  {$x_{n-3}$} (q_4)
					(q_4)     edge [bend left]  node           {$x_{n-2}$} (q_3)
					(q_3)     edge [bend left]  node           {$x_{n-1}$} (q_2)
					(q_2)     edge [bend left]  node           {$x_n$}     (q_1);
			\end{tikzpicture}
			\caption{
				A nondeterministic finite automaton that only accepts one string
				$x= x_1x_2x_3x_4 \cdots x_n$ of length $n = 2m + 1$.
			}
			\label{fig1}
		\end{figure}

		\begin{df}
			The \emph{complexity deficiency} of a word $x$ of length $n$ is
			\[
				D_n(x) = D(x) = b(n) - A_N(x).
			\]
		\end{df}
		The distribution of $A_N(w)$ for $w$ of length $n\le 23$ is given in Table \ref{deficiency}.
		The notion of deficiency is motivated by the experimental observation that about half of all strings have deficiency 0.
		\begin{sidewaystable}
			\centering
			\begin{spreadtab}{{tabular}{|r|r|r|r|r|r|r|r|r|r|r|r|r|}}
				\hline
				@$n\backslash k$ &1&2&3&4&5&6&7&8&9&10&11&12\\
				\hline
				23 & :={8388608-sum(c2:m2)} & 6 & 20 & 58 & 164 &  430 & 2540 & 14252 &  80962 & 442278 & 2160662 & 5687234\\
				22 & :={4194304-sum(c3:m3)} & 6 & 20 & 58 & 164 &  502 & 2846 & 16024 &  94732 & 451368 & 2089418 & 1539164\\
				21 & :={2097152-sum(c4:m4)} & 6 & 20 & 58 & 176 &  496 & 3168 & 18720 & 108042 & 504794 & 1461670 &\\
				20 & :={1048576-sum(c5:m5)} & 6 & 20 & 58 & 164 &  430 & 3814 & 23328 & 115896 & 529148 &  375710 &\\
				19 &  :={524288-sum(c6:m6)} & 6 & 20 & 58 & 164 &  582 & 4996 & 26542 & 140668 & 351250 & &\\
				18 &  :={262144-sum(c7:m7)} & 6 & 20 & 58 & 188 &  598 & 5692 & 29990 & 136024 &  89566 & &\\
				17 &  :={131072-sum(c8:m8)} & 6 & 20 & 58 & 200 &  514 & 7102 & 37042 &  86128 & &&\\
				16 &   :={65536-sum(c9:m9)} & 6 & 20 & 58 & 164 &  752 & 7738 & 34320 &  22476 & &&\\
				15 & :={32768-sum(c10:m10)} & 6 & 20 & 58 & 226 &  908 & 8530 & 23018 & &&&\\
				14 & :={16384-sum(c11:m11)} & 6 & 20 & 58 & 244 & 1270 & 9668 &  5116 & &&&\\
				13 &  :={8192-sum(c12:m12)} & 6 & 20 & 64 & 250 & 2076 & 5774 &       & &&&\\
				12 &  :={4096-sum(c13:m13)} & 6 & 20 & 58 & 282 & 2090 & 1638 &       & &&&\\
				11 &  :={2048-sum(c14:m14)} & 6 & 20 & 58 & 564 & 1398 &      &       & &&&\\
				10 &  :={1024-sum(c15:m15)} & 6 & 20 & 64 & 588 &  344 &      &       & &&&\\
				 9 &   :={512-sum(c16:m16)} & 6 & 20 & 78 & 406 &      &      &       & &&&\\
				 8 &   :={256-sum(c17:m17)} & 6 & 20 & 130&  98 &      &      &       & &&&\\
				 7 &   :={128-sum(c18:m18)} & 6 & 22 & 98 &     &      &      &       & &&&\\
				 6 &    :={64-sum(c19:m19)} & 6 & 26 & 30 &     &      &      &       & &&&\\
				 5 &    :={32-sum(c20:m20)} & 6 & 24 &    &     &      &      &       & &&&\\
				 4 &    :={16-sum(c21:m21)} & 6 &  8 &    &     &      &      &       & &&&\\
				 3 &     :={8-sum(c22:m22)} & 6 &    &    &     &      &      &       & &&&\\
				 2 &     :={4-sum(c23:m23)} & 2 &    &    &     &      &      &       & &&&\\
				 1 &     :={2-sum(c24:m24)} &   &    &    &     &      &      &       & &&&\\
				 0 & 1                      &   &    &    &     &      &      &       & &&&\\
				\hline
			\end{spreadtab}
			\caption{The number of strings of length $0\le n\le 23$ having nondeterministic automatic complexity $k$.}\label{deficiency}
		\end{sidewaystable}
	\section{Time complexity}
		\begin{df}
			Let \textsc{DEFICIENCY} be the following decision problem.

			\emph{Given a binary word $w$ and an integer $d\ge 0$, is $D(w)>d$?}
		\end{df}
		\subsection{NP}
			Theorem \ref{np} is not surprising; we do not know whether \textsc{DEFICIENCY} is $\mathrm{NP}$-complete.
			\begin{thm}\label{np}
				\textsc{DEFICIENCY} is in $\mathrm{NP}$.
			\end{thm}
			\begin{proof}
				Shallit and Wang \cite[Theorem 2]{MR1897300} showed that
				one can efficiently determine whether a given DFA uniquely accepts $w$ among string of length $\abs{w}$.
				Hyde \cite[Theorem 2.2]{Hyde} extended that result to NFAs, from which the result easily follows.
			\end{proof}
		\subsection{E}
			\begin{df}
				Suppose $M$ is an NFA with $q$ states that uniquely accepts a word $x$ of length $n$.
				Throughout this paper we may assume that $M$ contains no edges except those
				traversed on input $x$.
				Consider the \emph{almost unlabeled transition diagram} of $M$, which is
				a directed graph whose vertices are the states of $M$
				and whose edges correspond to transitions.
				Each edge is labeled with a 0 except for an edge entering the initial state as described below.

				We define the \emph{accepting path} $P$ for $x$ to be the sequence of $n+1$ edges traversed in this graph,
				where we include as first element an edge labeled with the empty string $\eps$ that enters the initial state $q_0$ of $M$.

				We define the \emph{abbreviated accepting path} $P'$ to be the sequence of edges obtained from $P$ by considering each edge
				in order and deleting it if it has previously been traversed.
			\end{df}
			\begin{lem}\label{types}
				Let $v$ be a vertex visited by an abbreviated accepting path $P'=(e_0, \ldots, e_t)$.
				Then $v$ is of one of the following five types.
				\begin{enumerate}
					\item In-degree 1 (edge $e_i$), out-degree 1 (edge $e_{i+1}$).
					\item In-degree 2 (edges $e_i$ and $e_j$ with $j>i$), out-degree 1 ($e_{i+1}$).
					\item In-degree 1 (edge $e_i$), out-degree 2 (edges $e_{i+1}$ and $e_j$, $j>i+1$).
					\item In-degree 2 (edges $e_i$ and $e_j$ with $j>i$), out-degree 2 ($e_{i+1}$ and $e_{j+1}$).
					\item In-degree 1 (edge $e_t$), out-degree 0.\footnote{This type was omitted by Shallit and Wang.}
				\end{enumerate}
			\end{lem}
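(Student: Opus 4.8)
The engine of the proof is a single idea: since $M$ has a \emph{unique} accepting path of length $n$, we may never exhibit a second walk of length $n$ from the initial state $q_0$ to an accepting state. View $P=(e_0,e_1,\dots,e_n)$ as the walk $q_0=v_0\to v_1\to\cdots\to v_n$, with $e_0$ the distinguished $\eps$-edge into $v_0$. Because $M$ is assumed to carry no edges other than those traversed on $x$, every edge of $M$ appears in $P'$, so the in- and out-degrees of a vertex computed inside $P'$ coincide with its in- and out-degrees in $M$. It therefore suffices to bound the latter and then read off how the first occurrences are laid out along $P'$.

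The heart of the matter is that both degrees are at most $2$. Suppose $v$ is visited at three times $a<b<c$. The sub-walks $L_1$ (from time $a$ to time $b$) and $L_2$ (from time $b$ to time $c$) are closed walks based at $v$, so the spliced walk $(\text{prefix up to }a)\,L_2\,L_1\,(\text{suffix from }c)$ is again a legal walk from $q_0$ to $v_n$ with exactly $n$ edges. If the edge leaving $v$ at time $a$ differs from the edge leaving $v$ at time $b$, then $L_2L_1$ and $L_1L_2$ differ already in their first edge, so the spliced walk is a second accepting path of length $n$, which is impossible. Symmetrically, if the edge entering $v$ at time $b$ differs from the edge entering $v$ at time $c$, then $L_2L_1$ and $L_1L_2$ differ in their last edge and we again obtain a forbidden second path. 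A short pigeonhole argument over the sequence of out-edges (resp. in-edges) recorded at the successive visits to $v$ shows that three distinct out-edges (resp. in-edges) would force three visits $a<b<c$ exhibiting exactly such a differing pair; hence $v$ has at most two distinct out-edges and at most two distinct in-edges.

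With the degree bounds in hand, the classification is bookkeeping about first occurrences. Every visited vertex is entered at least once, so its in-degree is at least $1$; and a vertex can fail to be exited only if it is visited a single time and that time is $n$, which is precisely Type 5 (the terminal vertex, in-degree $1$, out-degree $0$) omitted by Shallit and Wang. For every other vertex the first-occurrence in-edges appear along $P'$ in the order in which $v$ is first reached, so the lower-indexed in-edge $e_i$ is the one by which $v$ is first entered; at that moment $v$ has never been left, whence the next walk-edge both leaves $v$ and is itself a first occurrence, i.e. it is exactly $e_{i+1}$. This justifies every ``$e_{i+1}$ leaves $v$'' assertion together with the index inequalities $j>i$ and $j>i+1$. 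Crossing the possibilities $(\text{in-degree},\text{out-degree})\in\{1,2\}\times\{0,1,2\}$ and discarding the impossible case $(2,0)$ (out-degree $0$ already forces in-degree $1$) leaves exactly the five listed types.

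The only genuinely delicate point is the degree bound, and within it the need to compare the right boundary edges: $L_1\ne L_2$ by itself does \emph{not} guarantee that the spliced walk differs from $P$, since $L_1L_2=L_2L_1$ whenever $L_1$ and $L_2$ are powers of a common block. Comparing the \emph{first} edges of $L_1,L_2$ (for the out-degree bound) and their \emph{last} edges (for the in-degree bound) is what makes the spliced walk provably distinct. A second subtlety worth flagging is the exact reading of Definition \ref{precise}: the spliced walk may well spell a word other than $x$, but it is still an accepting path of length $n$, and it is such paths, not merely the paths spelling $x$, that uniqueness forbids.
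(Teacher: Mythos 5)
Your proposal follows the same route as the paper's proof---bound the in- and out-degrees by $2$ via uniqueness of the accepting path, then read off the placement of first occurrences along $P'$---and on the degree bound you are actually more careful than the paper, which asserts the bound in one line. Your splicing argument is correct: comparing the \emph{first} edges of $L_1,L_2$ (for out-degree) and their \emph{last} edges (for in-degree) is what makes the spliced walk provably distinct from $P$, and your reading of Definition \ref{precise} (any second accepting path of length $n$ is forbidden, even one spelling a different word) is the right one. However, there is one clause of the statement that your bookkeeping paragraph never reaches: in Type 4 the second out-edge is asserted to be \emph{exactly} $e_{j+1}$, the edge immediately following the second in-edge $e_j$ in $P'$. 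You justify every ``$e_{i+1}$ leaves $v$'' claim and the inequalities $j>i$ and $j>i+1$, and then conclude that crossing the degree possibilities ``leaves exactly the five listed types''---but the positional datum $e_{j+1}$ does not follow from the degree bounds plus first-entry bookkeeping. If $v$ is visited only twice, the bookkeeping does suffice (the new out-edge is traversed immediately after the new in-edge). But if $v$ is visited $m\ge 3$ times at $t_1<\cdots<t_m$, your own splicing argument forces the out-edges at $t_1,\ldots,t_{m-1}$ to coincide and the in-edges at $t_2,\ldots,t_m$ to coincide; hence the walk-edge immediately after the first traversal of $e_j$ (at $t_2$) is the \emph{old} edge $e_{i+1}$, not a first occurrence, and a priori new edges traversed elsewhere between $t_2$ and $t_m$ could intervene in $P'$ between $e_j$ and the second out-edge, whose first traversal happens only at the exit from $t_m$. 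That would make the second out-edge some $e_{j'}$ with $j'>j+1$, and it would break rule 4 of Definition \ref{def:E}, which depends on the $e_{j+1}$ placement---so the clause is not decorative.

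The gap is local and fixable with a tool you mention but do not deploy. Uniqueness forces not merely equality of boundary edges but $L_1L_2=L_2L_1$ for adjacent loops at $v$, since whenever $L_1L_2\ne L_2L_1$ the spliced walk is a second accepting path of length $n$. By Lemma \ref{lyndon:schuetzenberger}, $L_1$ and $L_2$ are then powers of a common closed walk $c$ at $v$; since $t_1,t_2,t_3$ are \emph{consecutive} visits, neither loop can be a proper power of a shorter loop at $v$, so $L_1=L_2$, and inductively all loops $L_r$, $1\le r\le m-1$, are identical (this is in essence Lemma \ref{precisedef}). Consequently every edge traversed between $t_2$ and the exit from $t_m$ is a repeat, so the next new edge after $e_j$ is precisely the second out-edge, which is therefore $e_{j+1}$ as claimed. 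For what it is worth, the paper's own proof is equally terse at exactly this point (``$e_j$ must also be as specified in the types''), so your write-up matches and in places exceeds its level of rigor; but as submitted, your proposal asserts the five types while leaving unproved the one clause of Type 4 that genuinely needs the loop-rigidity argument.
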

			\begin{proof}
				The out-degree and in-degree of each vertex encountered along $P'$ are both $\le 2$,
				since failure of this would imply non-uniqueness of accepting path.
				Since all the edges of $M$ are included in $P$, the list includes all the possible in-degree, out-degree combinations.
				We can define $i$ by the rule that $e_i$ is the first edge in $P'$ entering $v$.
				Again, since all the edges of $M$ are included in $P$,
				$e_{i+1}$ must be one of the edges contributing to the out-degree of $v$, if any,
				and $e_j$ must also be as specified in the types.
			\end{proof}
			Lemma \ref{types} implies that Definition \ref{def:E} makes sense.
			\begin{df}\label{def:E}
				For $0\le i\le t+1$ and $0\le n\le t+1$ we let $E(i,n)$ be a string representing the edges $(e_i, \cdots, e_n)$.
				The meaning of the symbols is as follows: 0 represents an edge.
				A left bracket $[$ represents a vertex that is the target of a backedge.
				A right bracket $]$ represents a backedge.
				The symbol $+$ represents a vertex of out-degree 2.
				When $i>n$, we set $E(i,n)=\eps$.
				Next, assuming we have defined $E(j,m)$ for all $m$ and all $j>i$,
				we can define $E(i,n)$ by considering the type of the vertex reached by the edge $e_i$.
				Let $a_i\in\{0,\eps\}$ be the label of $e_i$.
				\begin{enumerate}
					\item{} $E(i,n) := a_i E(i+1,n)$.
					\item{} $E(i,n) := a_i [ E(i+1,j-1)] E(j+1,n)$.
					\item{} $E(i,n) := a_i + E(i+1,n)$.
					\item{} $E(i,n) := a_i [+E(i+1,j-1)] E(j+1,n)$.
					\item{} $E(i,n) := a_i E(i+1,n)$.
				\end{enumerate}
			\end{df}
			\begin{figure}
				\begin{subfigure}[b]{0.5\textwidth}
					\centering%
					\begin{tabular}{|c|c|}
						\hline{}
						$E(i,n)$&Computation\\
						\hline{}
						$E( 0, 12)$ & $\eps E(1,12)=E(1,12)$\\
						$E( 1, 12)$ & $a_1 {[E(2,11)]}_{a_{12}}E(13,12)$\\
						$E(13, 12)$ & $\eps$\\
						$E( 2, 11)$ & $a_2 E( 3, 11)$\\
						$E( 3, 11)$ & $a_3 E( 4, 11)$\\
						$E( 4, 11)$ & $a_4 E( 5, 11)$\\
						$E( 5, 11)$ & $a_5 {[E( 6, 10)]}_{a_{11}} E(12,11)$\\
						$E( 6, 10)$ & $a_6 E( 7, 10)$\\
						$E( 7, 10)$ & $a_7 + E( 8, 10)$\\
						$E( 8, 10)$ & $a_8 {[E(9, 9)]}_{a_{10}} E(11,10)$\\
						$E( 9,  9)$ & $a_9 + E(10,9) = a_9+$\\
						$E( 8, 10)$ & $a_8 {[a_9+]}_{a_{10}}$\\
						$E( 7, 10)$ & $a_7 + a_8 {[a_9+]}_{a_{10}}$\\
						$E( 6, 10)$ & $a_6 a_7 + a_8 {[a_9+]}_{a_{10}}$\\
						$E( 5, 11)$ & $a_5 {[a_6 a_7 + a_8 {[a_9+]}_{a_{10}}]}_{a_{11}}$\\
						\hline
					\end{tabular}
					\caption{
						The $+$ marks the place of a loopback.
					}
					\label{computation}
				\end{subfigure}
				\begin{subfigure}[b]{0.1\textwidth}
				\end{subfigure}
				\begin{subfigure}[b]{0.4\textwidth}
					\centering%
					\begin{tikzpicture}[shorten >=1pt,node distance=1.5cm,on grid,auto]
						\node[state,initial]   (q_0)                {$q_0$}; 
						\node[state]           (q_1) [below of=q_0] {$q_1$}; 
						\node[state]           (q_2) [below of=q_1] {$q_2$}; 
						\node[state]           (q_3) [below of=q_2] {$q_3$};
						\node[state,accepting] (q_4) [below of=q_3] {$q_4$};
						\node[state]           (q_5) [below of=q_4] {$q_5$};
						\node[state]           (q_6) [below of=q_5] {$q_6$};
						\node[state]           (q_7) [below of=q_6] {$q_7$};
						\node[state]           (q_8) [below of=q_7] {$q_8$};
						\node[state]           (q_9) [below of=q_8] {$q_9$};
						\path[->] 
							(q_0) edge              node {$0$} (q_1)
							(q_1) edge              node {$1$} (q_2)
							(q_2) edge              node {$0$} (q_3)
							(q_3) edge              node {$0$} (q_4)
							(q_4) edge              node {$0$} (q_5)
							(q_5) edge              node {$1$} (q_6)
							(q_6) edge              node {$1$} (q_7)
							(q_7) edge              node {$0$} (q_8)
							(q_8) edge [bend left]  node {$0$} (q_9)
							(q_9) edge [bend left]  node {$1$} (q_8)
							(q_9) edge [bend left=70]  node {$1$} (q_5)
							(q_7) edge [bend right] node {$1$} (q_1);
					\end{tikzpicture}
					\caption{
						Complexity witness for the string 0100011001010101111100, one of the 2,655,140 simple strings of length $n=22$.
					}
				\end{subfigure}
				\caption{
					The code is
						$E(0,12)={a_1}{[a_2a_3a_4a_5 {[a_6 a_7 + a_8 {[a_9+]}_{a_{10}}]}_{a_{11}}]}_{a_{12}}$
					where
					$(a_1, \ldots, a_{12}) = (0,1,0,0,0,1,1,0,0,1,1,1)$. In reduced form, 
						$E(0,12)=0[0000 [0 0 + 0 [0+]]]$.
				}\label{fig:Figure3}
			\end{figure}
			\begin{lem}\label{reconstructed}
				The abbreviated accepting path 
				can be reconstructed from $E(0,t)$.
			\end{lem}
			We do not include the proof of Lemma \ref{reconstructed};
			instead, Figure \ref{fig:Figure3} gives an example of an automaton and the computation of $E(0,t)$.
			\begin{lem}\label{easy}
				\[
					\abs{E(a,b)}\le 2(b-a+1).
				\]
			\end{lem}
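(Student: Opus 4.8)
The plan is to prove the bound by induction on the number of edges represented, $k:=b-a+1$. It is cleanest to establish the equivalent statement $\abs{E(i,n)}\le 2(n-i+1)$ for all admissible pairs with $i\le n+1$, which is exactly Lemma~\ref{easy} after the renaming $a=i$, $b=n$. First I would record the closure property of the recursion of Definition~\ref{def:E}: every clause calls $E$ only on index pairs whose lower index never exceeds the upper index plus one, and the bracketing is properly nested (this is precisely what makes Definition~\ref{def:E} sensible, cf.\ Lemma~\ref{types}), so in the branching clauses we may assume $i<j\le n$. This guarantees that the induction is well-founded.

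The base case is $k=0$, that is $i=n+1$, where $E(i,n)=\eps$ and $\abs{E(i,n)}=0=2k$. For the inductive step, fix $i\le n$ and branch on the type (1--5) of the vertex reached by $e_i$, applying the matching clause of Definition~\ref{def:E}; throughout I use that the label $a_i\in\{0,\eps\}$ contributes $\abs{a_i}\le 1$ (in particular $a_0=\eps$ gives $\abs{a_0}=0$, so the uniform bound is safe). In the single-recursion cases 1 and 5 the inductive hypothesis applied to the $(n-i)$-edge subproblem $E(i+1,n)$ gives $\abs{E(i,n)}\le\abs{a_i}+\abs{E(i+1,n)}\le 1+2(n-i)\le 2(n-i+1)$, and case 3 adds one symbol $+$, giving $\abs{E(i,n)}\le 2+2(n-i)=2(n-i+1)$ with equality.

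The two-recursion cases 2 and 4 are where the constant $2$ is actually consumed, so these are the crux. In case 2 the string is $a_i[\,E(i+1,j-1)\,]\,E(j+1,n)$ with $i<j\le n$; the two subproblems carry $j-i-1$ and $n-j$ edges respectively, both strictly fewer than $n-i+1$, while the fixed symbols $a_i,[,]$ contribute at most $3$. The inductive hypothesis then yields
\[
  \abs{E(i,n)}\le 3+2(j-i-1)+2(n-j)=2(n-i)+1\le 2(n-i+1).
\]
Case 4 is identical except for the extra symbol $+$, and gives $4+2(j-i-1)+2(n-j)=2(n-i+1)$, attaining equality. Since every branch respects the bound, the induction closes.

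The argument is essentially routine; the only points that need care are (i) choosing the induction measure to be the edge count $b-a+1$ rather than the string length itself, (ii) checking in cases 2 and 4 that the consumed edge $e_i$, the backedge $e_j$, and the two sub-ranges account for exactly $n-i+1$ edges, so that both recursive calls are strictly smaller and the recursion terminates, and (iii) verifying the arithmetic that makes cases 3 and 4 tight, which is what forces the factor $2$ and shows the bound cannot be improved to a smaller constant.
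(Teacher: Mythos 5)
Your proof is correct and follows essentially the same route as the paper's: induction on the edge count with a case split between the single-recursion clauses (1, 3, 5) and the two-recursion clauses (2, 4), closing with the same arithmetic $4+2(j-i-1)+2(n-j)=2(n-i+1)$. The only differences are cosmetic — you make the base case and well-foundedness explicit, and you count $3$ rather than $4$ fixed symbols in case 2 because Definition~\ref{def:E} omits the backedge label $a_j$ that the paper's appendix version attaches to the closing bracket — neither of which changes the argument.
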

			\begin{proof}[Proof of Lemma \ref{easy}.]
				The four rules are
				\begin{enumerate}
					\item{} $E(i,n)=a_i E(i+1,n)$
					\item{} $E(i,n)=a_i {[ E(i+1,j-1)]}_{a_j} E(j+1,n)$
					\item{} $E(i,n)=a_i+E(i+1,n)$
					\item{} $E(i,n)=a_i {[+E(i+1,j-1)]}_{a_j} E(j+1,n)$
				\end{enumerate}
				So either
				\[
					\abs{E(i,n)}\le 2+\abs{E(i+1,n)}
				\]
				or
				\[
					\abs{E(i,n)}\le 4+\abs{E(i+1,j-1)} + \abs{E(j+1,n)}.
				\]
				So if by induction hypothesis $\abs{E(a,b)}\le 2(b-a+1)$ then
				\[
					\abs{E(i,n)}\le 2+2(n-i-1+1) = 2(n-i+1)
				\]
				or
				\[
					\abs{E(i,n)}\le 4+2(j-1-i-1+1) + 2(n-j-1+1) = 2(n-i+1).\qedhere
				\]
			\end{proof}
			\begin{thm}\label{singly}
	 			\textsc{DEFICIENCY} is in E.
			\end{thm}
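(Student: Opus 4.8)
The plan is to decide \textsc{DEFICIENCY} by computing $A_N(w)$ exactly through an exhaustive search over a sharply bounded family of candidate automata, and then reading off the answer: since $D(w)=b(n)-A_N(w)$ and $b(n)$ is obtained from $n=\abs w$ in constant time, it suffices to compute $A_N(w)$ in time $2^{O(n)}$ and compare $b(n)-A_N(w)$ with $d$. The first reduction is to restrict attention to automata in ``no extraneous edges'' form. By the convention preceding Lemma \ref{types}, any NFA witnessing $A_N(w)$ may be assumed to contain only the edges traversed on input $w$; deleting the unused edges never destroys acceptance of $w$, never creates a second accepting path, and leaves the state count unchanged, so the minimum number of states over such restricted automata still equals $A_N(w)$. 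For each restricted automaton the abbreviated accepting path $P'$ is well defined, Lemma \ref{types} sorts its vertices into the five types, and Definition \ref{def:E} attaches to it the code $E(0,t)$.

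The key counting step is that each restricted automaton has a code of length $O(n)$. The full accepting path $P$ consists of exactly $n+1$ edges (the initial $\eps$-edge together with the $n$ transitions read on $w$), and $P'$ is obtained from $P$ by deleting repetitions, so $P'$ has $t+1\le n+1$ edges, i.e.\ $t\le n$. Lemma \ref{easy} then gives $\abs{E(0,t)}\le 2(t+1)\le 2(n+1)$. Hence every code is a string of length $O(n)$ over the fixed four-symbol alphabet $\{0,[,],+\}$, and there are at most $2^{O(n)}$ of them.

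The algorithm then enumerates all such strings. For each candidate code it checks validity and, by Lemma \ref{reconstructed}, reconstructs $P'$ and hence the underlying transition graph $G$ together with its state count $q\le b(n)$. Because the code is ``almost unlabeled,'' $G$ must still be decorated with letters of $w$; since $G$ has at most $n$ edges there are at most $2^{n}$ labelings, and for each labeling the resulting NFA $M$ can be tested in polynomial time --- using the unique-acceptance test underlying the proof of Theorem \ref{np} (Hyde's Theorem~2.2) --- to decide whether $w$ is the unique word of length $n$ that $M$ accepts, and with a unique accepting path. Recording the least $q$ over all pairs (code, labeling) that pass the test yields $A_N(w)$ exactly, so the total running time is $2^{O(n)}\cdot 2^{n}\cdot\mathrm{poly}(n)=2^{O(n)}$, placing \textsc{DEFICIENCY} in E.

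The main obstacle is the completeness of the search together with the bookkeeping around labels: one must verify that every minimal witnessing automaton really is produced by some enumerated code, which is exactly where the WLOG reduction to restricted automata, the type classification of Lemma \ref{types}, and the faithful reconstruction of Lemma \ref{reconstructed} are jointly needed, and that pairing each reconstructed graph with its $\le 2^{n}$ labelings recovers the intended letters. This second point is genuinely necessary rather than a formality, since $E$ by itself records neither the letters of $w$ nor, for instance when $w=a^{n}$ and $G$ is a single looping state, the number of times a loop must be traversed; the length $n$ and the word $w$ supply precisely this missing information. Once completeness and the $O(n)$ code-length bound are established, membership in E is immediate from the $2^{O(n)}$ candidate count and the polynomial-time verification per candidate.
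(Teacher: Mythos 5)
Your proof follows essentially the same route as the paper's: enumerate the $O(16^n)$ codes $E(0,t)$ of length at most $2(n+1)$ (Lemma \ref{easy}), use Lemma \ref{reconstructed} to recover the abbreviated accepting path and hence the automaton, and run the polynomial-time unique-acceptance test of Theorem \ref{np} on each candidate, for $2^{O(n)}$ time overall. Your additional enumeration of the at most $2^n$ edge-labelings is a sound and slightly more careful treatment of a point the paper compresses into the phrase ``and hence $M$'' --- the almost unlabeled code does not itself record the letters of $w$ --- and since it costs only another single-exponential factor, it leaves the argument and the conclusion unchanged.
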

			\begin{proof}
				Let $w$ be a word of a length $n$, and let $d\ge 0$. To determine whether $D(w)>d$,
				we must determine whether there exists an NFA $M$ with at most $\lfloor\frac{n}{2}\rfloor - d$ states
				which accepts $w$, and accepts no other word of length $n$.
				Since there are \emph{prima facie} more than single-exponentially many automata to consider,
				we consider instead codes $E(0,t)$ as in Definition \ref{def:E}.
				By Lemma \ref{reconstructed} we can recover the abbreviated accepting path $P'$ and hence $M$ from such a code.
				The number of edges $t$ is bounded by the string length $n$, so by Lemma \ref{easy}
				\[
					\abs{E(0,t)}\le 2(t+1) \le 2(n+1);
				\]
				since there are four symbols this gives
				\[
					4^{2(n+1)}=O(16^n)
				\]
				codes to consider.
				Finally, to check whether a given $M$ accepts uniquely takes only polynomially many steps, as in Theorem \ref{np}.
			\end{proof}
			\begin{rem}
				The bound $16^n$ counts many automata that are not uniquely accepting; the actual number may be closer to $3^n$ based on
				computational evidence.
			\end{rem}

	\section{Powers and complexity}
		In this section we shall exhibit infinite words all of whose prefixes have complexity deficiency bounded by 1.
		We say that such a word has a hereditary deficiency bound of 1.
		\subsection{{\Squarefree} words}
			\begin{lem}\label{lyndon:schuetzenberger}
				Let $x$ and $y$ be strings over an arbitrary alphabet with $xy = yx$.
				Then there is a string $z$ and integers $k$ and $\ell$ such that $x = z^k$ and $y = z^\ell$.
			\end{lem}
			Lemma \ref{lyndon:schuetzenberger} is proved in Shallit \cite[Theorem 2.3.3]{Shallit:2008:SCF:1434864}
			and is originally due to Lyndon and Schuetzenberger \cite{MR0162838}.
			\begin{df}\label{factor}
				A word $x$ is a \emph{factor} in a word $y$ if $y=uxv$ for some words $u$ and $v$.
				In this case we also say that $y$ \emph{contains} $x$.
			\end{df}

			We will use the following simple strengthening from DFAs to NFAs of a fact used in \cite[Theorem 9]{MR1897300}.
			\begin{thm}\label{nfa9fact}
				If an NFA $M$ uniquely accepts $w$ of length $n$, and visits a state $p$ at least $k+1$ times, where $k\ge 2$,
				during its computation on input $w$,
				then $w$ contains a $k$th power.
			\end{thm}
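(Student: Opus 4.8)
The plan is to decode, from the unique accepting computation, the $k$ nonempty blocks of $w$ read off between successive visits to $p$, to prove that these blocks all commute, and then to recognize their concatenation as a high power of a single word.

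First I would fix $k+1$ times $0 \le i_0 < i_1 < \cdots < i_k \le n$ at which the computation on $w$ occupies state $p$, and set $u_j = w_{i_{j-1}+1}\cdots w_{i_j}$ for $1 \le j \le k$. Because $M$ has no $\eps$-transitions, each $u_j$ is a nonempty label of a closed walk at $p$, and $u_1 u_2 \cdots u_k$ is a factor of $w$. The crux is a \emph{swapping argument}: for any adjacent pair, replacing the stretch of the path that reads $u_i u_{i+1}$ by the stretch that reads $u_{i+1} u_i$ — legitimate, since both are closed walks at $p$ — yields another accepting path of the same length $n$, spelling the word obtained from $w$ by transposing that one block pair. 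Since $M$ has a unique accepting path of length $n$, this path must be the original one, forcing $u_i u_{i+1} = u_{i+1} u_i$ for every $i$.

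From pairwise commutation I would apply Lemma \ref{lyndon:schuetzenberger}, taking $a = \hat a = u_i$ and $b = u_{i+1}$, to write each adjacent pair as powers of a common word, so that consecutive blocks share a primitive root; since the primitive root of $u_i$ is unique, transitivity yields a single word $c$ with $u_j = c^{m_j}$ and $m_j \ge 1$ for all $j$. Hence $u_1 \cdots u_k = c^{m_1 + \cdots + m_k}$ with exponent at least $k$, so $w$ contains $c^k$ as a factor, which is the desired $k$th power.

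I expect the swapping step to be the main obstacle: one must verify that the rearranged walk is a genuine accepting path of length exactly $n$, and that whenever the transposed word differs from $w$ it constitutes a \emph{second} accepting path of that length, so that the uniqueness hypothesis can be invoked to force commutation. The remaining passage from pairwise commutation to a common primitive root is routine, resting on the standard fact that two nonempty words commute precisely when they are powers of a common word.
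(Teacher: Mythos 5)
Your proof is correct, and its skeleton is the same as the paper's: decompose the computation into the $k$ nonempty blocks read between successive visits to $p$, use uniqueness of the accepting path to show that rearranging closed walks at $p$ cannot produce a new accepted word, extract commutation relations, and apply Lemma \ref{lyndon:schuetzenberger} to recognize a $k$th power. The difference lies in how the commutation step is organized. The paper performs a \emph{single} swap: it picks a block $w_j$ of minimal length, sets $\hat w_j$ equal to the concatenation of all the other blocks, and compares the two accepted words $w_0 w_j \hat w_j w_{k+1}$ and $w_0 \hat w_j w_j w_{k+1}$; uniqueness gives the one equation $w_j \hat w_j = \hat w_j w_j$, one application of the lemma writes both as powers of a common $c$, and the length inequality $\abs{\hat w_j}\ge (k-1)\abs{w_j}$ immediately forces the exponent to be at least $k$. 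You instead derive $k-1$ adjacent-transposition identities $u_i u_{i+1} = u_{i+1} u_i$ and chain them through uniqueness of the primitive root. That chaining is sound, but note that Lemma \ref{lyndon:schuetzenberger} as stated only produces \emph{some} common word $c$ for each adjacent pair, not a primitive one, so your write-up would need to supply the standard primitive-root uniqueness fact (or an induction replacing it) as an extra ingredient the paper avoids entirely. What each approach buys: the paper's minimal-block swap is more economical --- one appeal to uniqueness, no primitivity machinery, and the exponent bound falls out of a length count; your pairwise version yields strictly more information (all blocks commute, hence all are powers of one root), and its local, adjacent-swap structure would survive in settings where only neighboring excursions can be exchanged.
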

			\begin{proof}
				Let $w=w_0 w_1 \cdots w_k w_{k+1}$ where
				\begin{itemize}
					\item $w_0$ is the portion of $w$ read before the first visit to the state $p$,
					\item $w_i$ is the portion of $w$ read between visits number $i$ and $i+1$ to the state $p$ for $1\le i\le k$, and
					\item $w_{k+1}$ is the portion of $w$ read after the last visit to the state $p$.
				\end{itemize}
				Thus $\abs{w_i}\ge 1$ for each $1\le i\le k$, but it is possible to have $\abs{w_0}=0$ ($\abs{w_{k+1}}=0$) since the
				initial (final) state of $M$'s on input $w$ computation may be $p$.

				For any permutation $\pi$ on ${1, \ldots, k}$, $M$ accepts $w_0w_{\pi(1)}\cdots w_{\pi(k)}w_{k+1}$.
				Let $1\le j\le k$ be such that $w_j$ has minimal length and let
				\[
					\hat w_j = w_1\cdots w_{j-1}w_{j+1}\cdots w_k.
				\]
				Then $M$ also accepts
				\[
					w_0w_j \hat w_j w_{k+1}\quad\mathrm{and}\quad w_0\hat w_j w_j w_{k+1}.
				\]
				By uniqueness,
				\[
					w_0w_j \hat w_j w_{k+1} = w = w_0\hat w_j w_j w_{k+1}
				\]
				and so
				\[
					w_j \hat w_j = \hat w_j w_j.
				\]
				By Lemma \ref{lyndon:schuetzenberger}, $w_j$ and $\hat w_j$ are both powers of a string $z$.
				Since $\abs{\hat w_j}\ge (k-1)\abs{w_j}$, $w_j\hat w_j$ is at least a $k$th power of $z$, so $w$ contains a $k$th power of $z$.
			\end{proof}
			\begin{thm}[Extended Pigeonhole Principle]\label{php}
				If $aq+d$ pigeons are placed in $q$ pigeonholes where $d>0$,
				then it cannot be the case that all pigeonholes have at most $a$ pigeons;
				in fact, either
				\begin{itemize}
					\item{} there is a pigeonhole with at least $a+d$ pigeons; or
					\item{} there is a pigeonhole with at least $a+d-1$ pigeons, and another with $a+1$ pigeons; or
					\item{} there is a pigeonhole with at least $a+d-2$ pigeons, and another with $a+2$ pigeons; or		
					\item{} there is a pigeonhole with at least $a+d-2$ pigeons, and two others with $a+1$ pigeons; or
					\item{} all pigeonholes have at most $a+d-3$ pigeons (which is impossible if $a+d-3 \le a$ and $d>0$).
				\end{itemize}
			\end{thm}
			\begin{proof}
				Consider the maximum number of pigeons in a pigeonhole $m$. If $m\ge a+d$ we are in Case 1.
				If $m=a+d-1$, we consider all the other pigeons and pigeonholes;
				there are then $q-1$ pigeonholes and $aq+d-(a+d-1)=a(q-1)+1$ pigeons.
				By the plain Pigeonhole Principle, there is a pigeonhole with at least $a+1$ pigeons.
				If $m=a+d-2$, we repeat the argument,
				consider the maximum number of pigeons in a pigeonhole other than a given one with the maximum number
				of pigeons.
			\end{proof}
			We next strengthen a particular case of \cite[Theorem 9]{MR1897300} to NFAs.
			\begin{thm}\label{nfa9square}
				A {\squarefree} word has deficiency 0.
			\end{thm}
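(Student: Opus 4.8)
The plan is to show that square-freeness forces the complexity to attain its maximal value $b(n)$, so that the deficiency $D(x)=b(n)-A_N(x)$ vanishes. Since Theorem \ref{Hyde} already gives $A_N(x)\le b(n)$, we always have $D(x)\ge 0$, and it suffices to prove the reverse inequality $A_N(x)\ge b(n)$; equivalently, to rule out the existence of any NFA with strictly fewer than $b(n)=\lfloor n/2\rfloor+1$ states that uniquely accepts $x$.

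I would argue by contradiction. Suppose $M$ uniquely accepts $x$ using $q=A_N(x)\le\lfloor n/2\rfloor$ states. By Definition \ref{precise}, $M$ has no $\eps$-transitions and there is a single accepting path of length $n$, so this path traverses exactly $n$ edges and therefore visits $n+1$ states counted with multiplicity. Now apply pigeonhole: if every state were visited at most twice, the total number of visits would be at most $2q\le 2\lfloor n/2\rfloor\le n<n+1$, a contradiction. Hence some state $p$ is visited at least three times during the computation on input $x$.

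Finally I would invoke Theorem \ref{nfa9fact} with $k=2$: a state visited $k+1=3$ times forces $x$ to contain a $2$nd power, that is, a square. This contradicts the hypothesis that $x$ is square-free, and the contradiction completes the proof that $A_N(x)=b(n)$, i.e. $D(x)=0$.

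The argument is essentially a pigeonhole count feeding into Theorem \ref{nfa9fact}, so the only points to watch are the tightness of the counting and the exponent bookkeeping. The inequality $2\lfloor n/2\rfloor\le n$ must yield $n+1>2q$ in both the even case (where $2\lfloor n/2\rfloor=n$) and the odd case (where $2\lfloor n/2\rfloor=n-1$), and both do; and one must note that $k=2$ is exactly the smallest exponent permitted by Theorem \ref{nfa9fact}, so that three visits suffice to produce a genuine square rather than only some higher power. This is the main thing to verify, but it is routine rather than a genuine obstacle.
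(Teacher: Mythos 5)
Your proof is correct and follows essentially the same route as the paper's: a pigeonhole count showing that with at most $\lfloor n/2\rfloor$ states some state must be visited at least three times along the accepting path of $n+1$ state-visits, followed by Theorem \ref{nfa9fact} with $k=2$ to produce a square. The paper packages the counting step via its Extended Pigeonhole Principle (Theorem \ref{php}), but the substance is identical to your plain pigeonhole argument.
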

			\begin{proof}
				Suppose $w$ is a word of length $n=2k$ or $n=2k+1$, of deficiency $d$.
				Then there is a witnessing automaton $M$ with $q = k + 1 - d$ states.
				Since $n + 1\ge 2k+ 1 = 2(k + 1 - d) + 2d - 1 = 2q + (2d-1)$, by the Extended Pigeonhole Principle (Theorem \ref{php}),
				there is a state $p$ which is visited $2+(2d-1)=3$ times $t_1 < t_2 < t_3$, during the $n + 1$ times of the computation of $M$
				on input $w$ (and is not visited at any other times in the interval $[t_1,t_3]$).
				By Theorem \ref{nfa9fact}, $w$ contains a square.
			\end{proof}
			\begin{cor}
				There exists an infinite word of hereditary deficiency 0.
			\end{cor}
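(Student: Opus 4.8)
The plan is to produce a single infinite word all of whose prefixes are {\squarefree}, and then apply Theorem \ref{nfa9square} one prefix at a time. The key observation that makes this immediate is that, by Theorem \ref{Hyde}, the deficiency $D(x)=b(\abs{x})-A_N(x)$ is always nonnegative, so $0$ is the least value it can take. Consequently an infinite word has hereditary deficiency $0$ exactly when every one of its prefixes is {\squarefree}, and in that case the conclusion is nothing more than Theorem \ref{nfa9square} applied along the prefixes.

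First I would invoke the classical theorem of Thue that infinite {\squarefree} words exist over a three-letter alphabet; a binary alphabet cannot serve, since the longest {\squarefree} binary words have length $3$. A concrete witness over $\{0,1,2\}$ is the fixed point of the Thue morphism $0\mapsto 012$, $1\mapsto 02$, $2\mapsto 1$. Let $\mathbf{w}$ denote such an infinite {\squarefree} word.

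Next I would use the elementary fact that square-freeness is inherited by factors, and in particular by prefixes: if $\mathbf{w}$ contains no square, then neither does any initial segment $\mathbf{w}\restrict n$. Hence each prefix of $\mathbf{w}$ is a finite {\squarefree} word, so Theorem \ref{nfa9square} gives $D(\mathbf{w}\restrict n)=0$ for every $n$, which is precisely the assertion that $\mathbf{w}$ has hereditary deficiency $0$. There is essentially no obstacle to overcome here: the sole nontrivial ingredient is the existence of infinite {\squarefree} words, a classical result, and everything else is the routine transfer of the finite statement of Theorem \ref{nfa9square} to the infinite word by quantifying over its prefixes.
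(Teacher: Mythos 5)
Your proof is correct and takes essentially the same route as the paper: Thue's infinite {\squarefree} word over $\{0,1,2\}$ combined with Theorem \ref{nfa9square}, applied prefix by prefix. One incidental claim is wrong---hereditary deficiency $0$ is \emph{not} equivalent to all prefixes being {\squarefree}, since square-freeness is sufficient but not necessary for deficiency $0$ (cf.\ Conjecture \ref{evidence}, which posits a \emph{binary} such word even though binary words of length $\ge 4$ always contain squares)---but you only use the correct direction of that equivalence, so the argument stands.
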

			\begin{proof}
				There is an infinite {\squarefree} word over the alphabet $\{0,1,2\}$ as shown by Thue \cite{ThueTwo}.
				The result follows from Theorem \ref{nfa9square}.
			\end{proof}
		\subsection{{\Cubefree} and overlap-free words}
			\begin{df}
				For a word $u$, let $\first(u)$ and $\last(u)$ denote the first
				and last letters of $u$, respectively.
				An \emph{overlap} is a word of the form $uu\first(u)$ (or equivalently, $\last(u)uu$).
				A word $w$ is \emph{overlap-free} if it does not contain any overlaps.
			\end{df}
			\begin{df}[Thue-Morse morphism]
				Let $\{0,1\}^*$ denote the set of all finite binary words.
				A \emph{morphism} is a function $\nu:\{0,1\}^*\rightarrow\{0,1\}^*$
				which is a homomorphism with respect to concatenation, in the sense that
				\[
					\nu(xy) = \nu(x)\nu(y)
				\]
				for all $x,y\in\{0,1\}^*$.
				The \emph{Thue-Morse morphism} is the unique morphism $\mu$ satisfying
				\[
					\mu(0) = 01,\qquad \mu(1) = 10.
				\]
			\end{df}
			\begin{thm}[Shelton and Soni \cite{MR787496}]\label{shelton}
				Let $a\ge 0$.
				The words $\mu^a(00)$ and $\mu^a(001001)$ are overlap-free squares of lengths $2^{a+1}$, $3\cdot 2^{a+1}$, respectively\footnote{
					There is a minor typo in Shelton and Soni's paper (line 10 of page 98),
					equivalent to writing $\mu^a(001)$ instead of $\mu^a(001001)$.
				}.
			\end{thm}
			\begin{exa}[Examples of Theorem \ref{shelton}.]
				The following overlap-free squares exemplify the first few possible lengths, 2, 4, 6, 8 and 12:
				\[
					00,\quad \mu(00)=0101,\quad 001001,\quad \mu^2(00)=01100110,
				\]
				\[
					\quad \mu(001001)=010110010110.
				\]
			\end{exa}
			Theorem \ref{shelton} is used in the proof of the following result.
			\begin{thm}[Shelton and Soni \cite{MR787496}]\label{soni}
				Let $\ell$ be a positive integer. The following are equivalent.
				\begin{enumerate}
					\item There exists an overlap-free binary word $y$ and a word $x$ such that $y$ contains $xx$ and $\ell=\abs{xx}$.
					\item $\ell\in \{2^a: a\ge 1\}\cup\{3\cdot 2^a:a\ge 1\}$.
				\end{enumerate}
			\end{thm}
			\begin{lem}\label{cubeContainsCube}
				If a cube $www$ contains another cube $xxx$ then either
				$\abs{x}=\abs{w}$, or
				$xx\first(x)$ is contained in the first two consecutive occurrences of $w$, or
				$\last(x)xx$ is contained in the last two occurrences of $w$.
			\end{lem}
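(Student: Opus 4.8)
The plan is to reduce the statement to a bookkeeping argument about the position at which $xxx$ sits inside $www$, after which no combinatorics on words is actually required. First I would set $n=\abs{w}$ and $m=\abs{x}$ and record the containment constraint: since $xxx$ is a factor of $www$, comparing lengths gives $3m\le 3n$, hence $m\le n$. If $m=n$ the first alternative $\abs{x}=\abs{w}$ holds, so the entire content is the case $m<n$, where I must exhibit one of the two weak cubes inside the appropriate pair of copies of $w$.

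The key observation I would isolate is that the two weak cubes named in the statement are exactly the prefix and the suffix of $xxx$ of length $2m+1$: the word $xx\,\text{first}(x)$ is the length-$(2m+1)$ prefix of $xxx$, and $\text{last}(x)\,xx$ is its length-$(2m+1)$ suffix. Fixing a position $p$ at which the factor $xxx$ occurs in $www$ (so $0\le p\le 3n-3m$), these two weak cubes therefore occupy the determined position blocks $[p,\,p+2m+1)$ and $[p+m-1,\,p+3m)$ inside $www$. Since those blocks already carry precisely the letters of the weak cubes, the phrases ``contained in the first two (resp.\ last two) occurrences of $w$'' amount to the inclusion of one of these blocks into $[0,2n)$ (resp.\ $[n,3n)$).

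Translating the two inclusions into inequalities on $p$, the first weak cube lies within the first two copies exactly when $p+2m+1\le 2n$, i.e.\ $p\le 2n-2m-1$, while the second lies within the last two copies exactly when $p+m-1\ge n$, i.e.\ $p\ge n-m+1$. The crux is then a short covering argument on the integer $p$: the two admissible ranges $\{p\le 2n-2m-1\}$ and $\{p\ge n-m+1\}$ fail to cover some integer only if the open gap $(2n-2m-1,\,n-m+1)$ contains one, which happens exactly when $m\ge n$. Combined with the containment bound $m\le n$ this would force $m=n$, the case already disposed of; hence for $m<n$ every admissible $p$ lands in at least one range, yielding one of the two weak-cube conclusions.

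I expect the only genuine obstacle to be the off-by-one bookkeeping: confirming that the length-$(2m+1)$ prefix and suffix of $xxx$ are literally $xx\,\text{first}(x)$ and $\text{last}(x)\,xx$, and indexing the three copies of $w$ as the blocks $[0,n)$, $[n,2n)$, $[2n,3n)$ so that ``first two'' and ``last two'' become $[0,2n)$ and $[n,3n)$. Once this indexing is fixed, the equivalence of the covering inequality with $m<n$ is immediate, and no periodicity or Fine--Wilf input is needed; the actual letters of $w$ and $x$ never enter the argument beyond the bare occurrence of $xxx$ inside $www$.
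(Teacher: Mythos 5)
Your proof is correct. In fact, the paper states Lemma \ref{cubeContainsCube} with no proof at all (nothing for it appears in the Appendix, where the other deferred proofs live), so your argument fills an actual gap rather than duplicating one. The bookkeeping checks out: with the copies of $w$ occupying $[0,n)$, $[n,2n)$, $[2n,3n)$ and the occurrence of $xxx$ at $[p,p+3m)$ with $0\le p\le 3n-3m$, the weak cube $xx\,\text{first}(x)$ is the length-$(2m+1)$ prefix of $xxx$, contained in $[0,2n)$ iff $p\le 2n-2m-1$, and $\text{last}(x)\,xx$ is the length-$(2m+1)$ suffix, contained in $[n,3n)$ iff $p\ge n-m+1$ (its right endpoint $p+3m\le 3n$ being automatic). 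The uncovered integers are exactly those $p$ with $2n-2m\le p\le n-m$, a range that is nonempty iff $n\le m$, which together with $m\le n$ forces the already-handled case $\abs{x}=\abs{w}$. Your observation that no periodicity or Fine--Wilf input is needed is also right --- the letters of $w$ and $x$ never matter --- and it is worth noting this is consistent with how the lemma is applied in the paper (to $x=ww\hat w$ sitting inside $www$, where only positional containment is used). The one pedantic caveat: the statement implicitly assumes $x$ nonempty, since $\text{first}(x)$ and $\text{last}(x)$ are otherwise undefined; your inequalities silently use $m\ge 1$ in the same way, which is harmless but could be flagged in one sentence.
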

			\begin{proof}
				We prove the contrapositive. Suppose $xx\first(x)$ is not contained in the first two consecutive occurrences of $w$, and
				$\last(x)xx$ is not contained in the last two occurrences of $w$.
				Then the middle $\last(x)x\first(x)$ of the factor $xxx$ has $\last(w)w\first(w)$ as a factor, and hence $\abs{x}\ge\abs{w}$.
			\end{proof}
			\begin{thm}
				The deficiency of {\cubefree} binary words is unbounded.
			\end{thm}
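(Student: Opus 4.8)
The plan is to exhibit, for every $d$, a cube-free binary word whose every optimal NFA is forced to revisit many states, so that its state count falls well below $b(n)$. The guiding principle comes from Theorem \ref{nfa9fact}: in a uniquely accepting NFA for a cube-free word no state may be visited four times (that would force a cube), while a state visited exactly three times forces a square and, because reordering loop segments must preserve the unique word, such a triple visit comes from a \emph{single} simple cycle traversed exactly twice. I therefore build words whose minimal automaton is a chain of vertex-disjoint, doubly-traversed cycles, so that many states are visited three times and the total state count drops toward $n/3$.

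Concretely, I fix a super-increasing sequence of lengths $\ell_0<\ell_1<\cdots<\ell_k$ drawn from the Shelton--Soni spectrum $\{2^a:a\ge1\}\cup\{3\cdot2^a:a\ge1\}$ of Theorem \ref{soni}, and choose factors $u_i$ with $\abs{u_i}=\ell_i$ together with single connector letters $c_i$ so that
\[
	W_k = u_0u_0\,c_0\,u_1u_1\,c_1\cdots c_{k-1}\,u_ku_k
\]
is cube-free. The automaton $M_k$ has one simple cycle of length $\ell_i$ spelling $u_i$ for each $i$, the cycles pairwise vertex-disjoint, with the base of cycle $i$ joined to the base of cycle $i+1$ by a single forward edge reading $c_i$. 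Traversing each cycle twice and each forward edge once spells $W_k$, visiting every cycle-base three times and every other state twice. Thus $M_k$ has $q:=\sum_i\ell_i$ states while $n:=\abs{W_k}=2\sum_i\ell_i+k$, and since the connectors cost symbols but no new states, Theorem \ref{Hyde} gives
\[
	D(W_k)\ge b(n)-q=\left\lfloor\tfrac{n}{2}\right\rfloor+1-\sum_i\ell_i=\left\lfloor\tfrac{k}{2}\right\rfloor+1,
\]
which tends to infinity with $k$.

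It remains to verify the two defining properties of $M_k$. For unique acceptance, any accepting path of length $n$ must traverse every forward edge once (they lie on the only route to the terminal accepting base) and loops cycle $i$ some number $a_i\ge0$ of times, whence $\sum_i a_i\ell_i=2\sum_i\ell_i$; because the $\ell_i$ are super-increasing, the highest index with $a_i\ne2$ cannot be balanced by the remaining terms, forcing $a_i=2$ for all $i$, so the intended path is the unique length-$n$ accepting path and $A_N(W_k)\le q$. For cube-freeness I classify a putative cube $xxx$ by its position: a cube inside a single block $u_iu_i$ is excluded by choosing each $u_i$ so that $u_iu_i$ already occurs in a strongly cube-free word; and a cube meeting a junction has its three equal parts confined near one boundary, where Lemma \ref{cubeContainsCube}, combined with the super-increasing gaps between block lengths, shows the parts cannot be mutually equal once the connector letters $c_i$ are chosen to kill the remaining short-range coincidences.

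The main obstacle is precisely this cube-freeness verification under the rigid block structure: the automaton forces $W_k$ to be a chain of squares of prescribed, wildly unequal lengths separated by single letters, and one must show that so square-saturated a word can still be kept cube-free. The naive length-$1$ version fails outright, since doubling a single letter produces the forbidden $000$ or $111$ at the junctions; this is exactly why the construction needs the genuinely long squares supplied by Theorem \ref{soni} and the straddling-cube analysis of Lemma \ref{cubeContainsCube}, rather than short self-loops, to realize the disjoint doubly-traversed cycles cleanly.
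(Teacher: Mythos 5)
There is a genuine gap, and it sits at the heart of your unique-acceptance argument. You claim that because the lengths $\ell_i$ are super-increasing, the equation $\sum_i a_i\ell_i = 2\sum_i\ell_i$ with integers $a_i\ge 0$ forces $a_i=2$ for all $i$. This is false. Take $\ell_i = 2^{i+1}$ for $i=0,\ldots,k$ (super-increasing, and each length lies in the Shelton--Soni spectrum of Theorem \ref{soni}): then besides $a_i\equiv 2$ there is the solution $a_0=0$, $a_i=1$ for $1\le i\le k-1$, $a_k=3$, since $3\cdot 2^{k+1}+\sum_{i=1}^{k-1}2^{i+1} = 2^{k+3}-4 = 2\sum_{i=0}^{k}2^{i+1}$. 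So your $M_k$ accepts a second word of length $n$ and does not witness $A_N(W_k)\le q$. This is precisely the difficulty the paper confronts in its \emph{ternary} unboundedness result (Theorem \ref{unbounded}), where the loop lengths must be manufactured by the special mod-$3$ construction of Lemma \ref{triple9:12:17} ($x_{j,k}=3^{k-j}(3u_{j-1}+2)$) to make the representation $\sum a_i x_i = 2\sum x_i$ rigid. Worse, that construction is incompatible with your setting: for $k\ge 3$ some $x_{j,k}$ is divisible by $9$, hence not of the form $2^a$ or $3\cdot 2^a$, so by Theorem \ref{soni} no such square length is available in a strongly cube-free binary word. You would need a new number-theoretic lemma producing a rigid sequence inside the spectrum, and none is in sight. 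Your cube-freeness step has a second, softer gap: in the paper's chain construction the connectors are the third letter $2$, which trivially blocks squares from straddling junctions; in binary your $c_i\in\{0,1\}$, and "chosen to kill the remaining short-range coincidences" is an assertion, not a proof --- Lemma \ref{cubeContainsCube} speaks only to cubes inside cubes and does not by itself handle factors straddling two blocks $u_iu_i\,c_i\,u_{i+1}u_{i+1}$.

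The paper avoids both problems by not chaining at all. It takes a \emph{single} loop: choose $\ell=2^n$ with $ww$ strongly cube-free, $\abs{w}=\ell$ (possible by Theorem \ref{soni}), and let $x=ww\hat w$ where $\hat w$ is the length-$(\ell-1)$ prefix of $w$, i.e., the word goes around the cycle just short of three full times. The automaton is one cycle of $\ell$ states, which is deterministic, so unique acceptance is immediate --- no Diophantine rigidity is needed; cube-freeness of $x$ follows from Lemma \ref{cubeContainsCube} plus strong cube-freeness of $ww$, with no junctions to analyze; and the deficiency is $\ge \abs{x}/2-\ell = (\ell-1)/2$, growing linearly in the single loop length rather than by $\lfloor k/2\rfloor+1$ over $k$ blocks. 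In short, your architecture transplants the paper's ternary multi-loop construction into the binary case, which is exactly where its two load-bearing ingredients (the rigid length sequence and the separator letter) are unavailable.
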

			\begin{proof}
				Given $k$, we shall find a {\cubefree} word $x$ with $D(x)\ge k$.
				Pick a number $n$ such that $2^n\ge 2k+1$.
				Let $w := \mu^n(0)$, which is a word of length $\ell := 2^n$.
				By Theorem \ref{shelton}, $ww$ is overlap-free.
				Let $x=ww\hat w$ where $\hat w$ is the proper prefix of $w$ of length $\abs{w}-1$.
				By Lemma \ref{cubeContainsCube}, $x$ is {\cubefree}.
				The complexity of $x$ is at most $\abs{w}$ as
				we can just make one loop of length $w$, with code (Theorem \ref{singly})
				\[
					{[w_1\cdots w_{\ell-1}]}_{w_{\ell}}.
				\]
				And so
				\begin{align*}
					D(x)&\ge \left\lfloor \frac{\abs{x}}2\right\rfloor + 1 - \abs{w} \ge \frac{\abs{x}}2 - \abs{w}
				\\
					&= \frac{3\abs{w}-1}{2} - \abs{w}
					= \frac{\abs{w}}2 - \frac12 \ge k.\qedhere\\
				\end{align*}
			\end{proof}
		\subsection{Overlap-free words}
			\begin{thm}[Thue \cite{ThueTwo}]\label{thueStrong}
				The infinite Thue--Morse word
				\[
					\mathbf t = t_0t_1\cdots = 0110\, 1001\, 1001\, 0110\cdots
				\]
				given by
				\[
				 	b=\sum b_i2^i,\quad b_i\in\{0,1\}
					\quad\Longrightarrow\quad
					t_b=\sum b_i\mod 2,
				\]
				is overlap-free.
			\end{thm}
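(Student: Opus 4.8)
The plan is to recognize that being strongly cube-free --- containing no weak cube $uu\,\text{first}(u)$ --- is exactly the classical property of being \emph{overlap-free}: a factor $uu\,\text{first}(u)$ with $\abs{u}=p$ is precisely a factor $t_i\cdots t_{i+2p}$ satisfying $t_{i+j}=t_{i+j+p}$ for all $0\le j\le p$. I would first extract from the digit-sum definition the two recurrences $t_{2n}=t_n$ and $t_{2n+1}=1-t_n$, and from the second the single crucial \emph{parity fact}: $t_{2m}\ne t_{2m+1}$, hence $t_k=t_{k+1}$ can occur only when $k$ is odd (equivalently, every even $k$ has $t_k\ne t_{k+1}$). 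This immediately rules out any factor $aaa$, so if $\mathbf t$ contained a weak cube at all, it would have one of period $p\ge 2$.

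The argument then proceeds by infinite descent on the period of a shortest weak cube. Suppose $t_i\cdots t_{i+2p}$ is one of minimal period $p\ge 2$, and split on the parity of $p$. For even $p=2q$ I would restrict attention to the even-indexed positions within the occurrence: using $t_{2n}=t_n$ and $t_{2n+1}=1-t_n$, the subsequence $(t_{i+2j'})_{j'}$ equals, up to a global complement determined by the parity of $i$, the corresponding stretch of $\mathbf t$ itself. The relations $t_{i+2j'}=t_{i+2j'+2q}$ for $0\le j'\le q$ then read off as a weak cube of period $q=p/2$ in $\mathbf t$, which is strictly shorter; even the degenerate value $q=1$ is impossible, being the already-excluded $aaa$. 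This contradicts minimality, so the minimal period cannot be even.

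The odd case is where I expect the real work, and it is handled by the parity fact rather than by descent. For $p$ odd and $0\le j\le p-1$, the overlap relations give $t_{i+j}=t_{i+j+p}$ and $t_{i+j+1}=t_{i+j+1+p}$, so $t_{i+j}=t_{i+j+1}$ holds if and only if $t_{i+j+p}=t_{i+j+p+1}$; but the parity fact would then force both $i+j$ and $i+j+p$ to be odd, which is impossible since $p$ odd makes them of opposite parity. Hence no two adjacent letters among $t_i,\ldots,t_{i+p}$ are equal, i.e.\ this block of length $p+1$ is strictly alternating, forcing $t_{i+p}=1-t_i$ because $p$ is odd. This contradicts the overlap relation $t_{i+p}=t_i$ (the case $j=0$). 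Thus the minimal period can be neither even nor odd, no weak cube exists, and $\mathbf t$ is strongly cube-free.

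As an alternative packaging I would note that the whole descent is equivalent to showing that the Thue--Morse morphism $\mu\colon 0\mapsto 01,\ 1\mapsto 10$ preserves overlap-freeness together with $\mathbf t=\mu(\mathbf t)$; the parity-based odd-period argument above is precisely what makes that preservation work. The main obstacle throughout is the odd-period case: the even case is a clean self-similar reduction, whereas ruling out odd periods genuinely requires the observation that equalities of adjacent letters are confined to odd positions.
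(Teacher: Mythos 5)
Your proposal is correct, but there is nothing internal to compare it against: the paper does not prove this statement at all --- it is stated with an attribution to Thue, the citation \cite{Thue1, Thue2} stands in for the argument, and the result is used downstream only as a black box (in Theorem \ref{main}, in Theorem \ref{unbounded}, and in the corollary following Theorem \ref{almostMain}). What you have reconstructed is the classical overlap-freeness proof for the Thue--Morse word, and its steps all check out. Your translation of ``contains no weak cube $uu\,\mathrm{first}(u)$'' into ``no factor $t_i\cdots t_{i+2p}$ with $t_{i+j}=t_{i+j+p}$ for all $0\le j\le p$'' matches the paper's definition exactly. The recurrences $t_{2n}=t_n$ and $t_{2n+1}=1-t_n$ are immediate from the digit-sum definition, and the parity fact correctly confines adjacent equalities $t_k=t_{k+1}$ to odd $k$, which disposes of period $1$. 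The even-period descent is sound: with $p=2q$, the relations at even offsets, $t_{i+2j'}=t_{i+2j'+2q}$ for $0\le j'\le q$, project via the recurrences to $t_{i'+j'}=t_{i'+j'+q}$ with $i'=i/2$ when $i$ is even, or $i'=(i-1)/2$ after a global complement when $i$ is odd (which preserves all equalities), i.e., an overlap of period $q<p$, with $q=1$ separately excluded. The odd-period case is also handled correctly: an adjacent equality at offset $j\le p-1$ propagates to offset $j+p$ by the overlap relations, forcing $i+j$ and $i+j+p$ both to be odd, impossible when $p$ is odd; hence $t_i\cdots t_{i+p}$ strictly alternates, so $t_{i+p}=1-t_i$, contradicting the $j=0$ relation $t_{i+p}=t_i$. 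One wording quibble: ``even-indexed positions within the occurrence'' should be ``positions at even offset from $i$,'' since for odd $i$ those absolute indices are odd --- your complement clause shows you meant the latter. Your closing remark is likewise accurate: the descent is precisely the statement that the morphism $0\mapsto 01$, $1\mapsto 10$ preserves overlap-freeness, combined with $\mathbf t$ being its fixed point. What your write-up buys is self-containedness; what the paper's citation buys is brevity for a known result it has no need to reprove.
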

			\begin{lem}\label{gelfond}
				Fix $j$ and $k$ and let $t_x$ denote the $x$th bit of the Thue--Morse word. The function
				\[
					f(u)=t_{x(u)-1}\quad\mathrm{where}\quad x(u) = 3^{k - j}(3u + 2)
				\]
				is eventually nonconstant.
			\end{lem}
			\begin{proof}
				Gelfond \cite{MR0220693} showed that $\mathbf t$ has no infinite arithmetic progressions
				(see also Morgenbesser, Shallit, Stoll \cite{MR2793891}).
			\end{proof}

			\begin{lem}\label{triple9:12:17}
				For each $k\ge 1$ there is a sequence $x_{1,k}, \ldots, x_{k,k}$ of positive integers such that
				\[
					\sum_{i=1}^k a_i x_{i,k}
					=2\sum_{i=1}^k x_{i,k}
					 \quad\Longrightarrow\quad a_1 = \cdots = a_k = 2.
				\]
				Let $t_j$ denote bit $j$ of the infinite Thue--Morse word.
				Then we can ensure that
				\begin{enumerate}
					\item \label{1} $x_{i,k}+1<x_{i+1,k}$ and
					\item \label{2} $t_{x_{i,k}}\ne t_{x_{i+1,k}}$ for each $1\le i<k$.
				\end{enumerate}
			\end{lem}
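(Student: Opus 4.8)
Reading the $a_i$ as nonnegative integers, the plan is to separate the arithmetic content of the displayed implication from the two side conditions and then to build the sequence greedily. Put $c_i=a_i-2$, so that $c_i\ge -2$ and the hypothesis $\sum_{i=1}^k a_i x_{i,k}=2\sum_{i=1}^k x_{i,k}$ becomes the homogeneous relation $\sum_{i=1}^k c_i x_{i,k}=0$; the desired conclusion is that $c=0$ is its only solution. The first point is that the feasible region is finite: from $\sum_i c_i x_{i,k}=0$ we get $\sum_{c_i>0}c_i x_{i,k}=\sum_{c_i<0}\abs{c_i}x_{i,k}\le 2\sum_i x_{i,k}$, so each coordinate is bounded. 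Thus it suffices to choose the $x_{i,k}$ so that they admit no nonzero relation $\sum_i c_i x_{i,k}=0$ with every $c_i\ge -2$.

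For the construction I would fix a modulus $M>4k$, require every $x_{i,k}\equiv 1\pmod M$, and confine all of them to a dyadic window $[N,2N)$ with $N$ large. Reducing a putative relation modulo $M$ yields $\sum_i c_i\equiv 0\pmod M$. The window forces $\sum_{c_i>0}c_i\le 4k$ (since $\sum_{c_i>0}c_ix_{i,k}\le 4kN$ while each $x_{i,k}\ge N$) and $\sum_{c_i<0}\abs{c_i}\le 2k$, so $\sum_i c_i\in[-2k,4k]$, and since $M>4k$ this gives $\sum_i c_i=0$. What remains are the balanced relations, those with $\sum_i c_i=0$ as well as $\sum_i c_i x_{i,k}=0$ and $c_i\ge -2$; the admissible coefficient vectors lie in the fixed finite box $\{-2,\dots,4k\}^k$, and each such vector forbids at most one value of a newly added point. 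Hence I can choose the $x_{i,k}$ one at a time, inside the prescribed residue class and window, avoiding the finitely many values that would create a balanced relation with the points already chosen.

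It remains to fold in the two side conditions. Condition (\ref{1}) asks only for gaps of at least $2$ and is automatic once $N$ is large, since the residue class $1+M\mathbb Z$ has abundantly many representatives in $[N,2N)$. Condition (\ref{2}), that $t_{x_{i,k}}\ne t_{x_{i+1,k}}$, is the crux: the admissible values of the next point form a cofinite subset of the single residue class $1+M\mathbb Z$, so I must know that the bits $t_n$ of the Thue word are non-constant along this arithmetic progression. This is precisely the equidistribution of the parity of the binary digit sum in arithmetic progressions, a classical theorem of Gelfond; it guarantees that both values occur with density $1/2$ among the candidates, so at each step I may pick the next point with Thue bit opposite to its predecessor while still respecting the congruence, the window, the gap, and the avoidance constraints.

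The main obstacle is exactly this coordination in the last step: the number-theoretic demands (a residue class, a dyadic window, and avoidance of the finitely many balanced relations) must be satisfied simultaneously with prescribed Thue-word values, and it is the equidistribution of digit-sum parity in arithmetic progressions that reconciles them. Once the feasible region is seen to be finite, the remaining Diophantine bookkeeping is routine.
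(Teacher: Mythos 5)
Your proposal is correct, but it takes a genuinely different route from the paper's. The paper builds the $x_{i,k}$ recursively and $3$-adically: $x_{1,1}=1$, $x_{i,k}=3x_{i,k-1}$ for $i<k$, and $x_{k,k}=3u_{k-1}+2$ with $u_{k-1}$ large, so that reducing $\sum a_i x_{i,k}=2\sum x_{i,k}$ modulo $3$ isolates the last coefficient ($a_k\equiv 2\pmod 3$), a size estimate rules out $a_k\ge 5$, and after cancelling $a_k=2$ one divides by $3$ and closes by induction on $k$; conditions (\ref{1}) and (\ref{2}) are then arranged by taking $u_{j-1}$ large and by the paper's Lemma \ref{gelfond} (from Gelfond: $\mathbf t$ is not eventually constant along the progressions $u\mapsto 3^{k-j}(3u+2)$). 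You instead put all $k$ numbers in one residue class $1\pmod M$ with $M>4k$ inside a dyadic window $[N,2N)$, so that (writing $c_i=a_i-2\ge -2$) the window caps $\sum_{c_i>0}c_i\le 4k$ and $\sum_{c_i<0}\abs{c_i}\le 2k$, the congruence forces $\sum c_i=0$, and the surviving ``balanced'' relations form a fixed finite box of coefficient vectors, each forbidding at most one value of the next point --- so a greedy choice works (your phrase ``with the points already chosen'' correctly maintains, stage by stage, the invariant that also handles vectors whose newest coordinate is zero). Both arguments ultimately rest on the same theorem of Gelfond for condition (\ref{2}), but with different strength: you need the quantitative form (equidistribution of binary digit-sum parity along arithmetic progressions, so each window supplies many candidates of either bit, enough to beat the $O_k(1)$ forbidden values and the gap constraint, which is automatic since distinct class members differ by at least $M>2$), whereas the paper needs only the qualitative corollary that no progression in $\mathbf t$ is eventually monochromatic. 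What your approach buys: no recursion, a transparent Diophantine mechanism (congruence plus window plus finite avoidance), and flexibility --- any prescribed bit pattern along the sequence, not just what the specific progressions happen to permit, can be realized. What the paper's approach buys: explicit closed-form witnesses $x_{j,k}=3^{k-j}(3u_{j-1}+2)$ and a weaker number-theoretic input.
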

			\begin{proof}
				Let
				\[
					x_{1,1}=1.
				\]
				Given $x_{1,k-1}, \ldots, x_{k-1,k-1}$, we let $x_{i,k}=3x_{i,k-1}$ for $i<k$ and $x_{k,k}=3u_{k-1}+2$
				for a sufficienctly large number $u_{k-1}$.
				Reducing the equation
				\[
				\sum_{i=1}^k a_i x_{i,k}
				=2\sum_{i=1}^k x_{i,k}
				\]
				modulo 3, we see that $a_k\equiv 2$ (mod $3$). If $a_k\ge 5$ then
				\[
					\sum_i a_i x_{i,k}\ge 5x_{k,k}=15u_{k-1}+10
				\]
				\[
					> 6\sum_{i<k}x_{i,k-1} + 6u_{k-1} +4 = 2\sum_{i<k} x_{i,k}+2(3u_{k-1}+2) = 2\sum_{i\le k}x_{i,k};
				\]
				provided
				\[
					3u_{k-1}+2> 2\sum_{i<k}x_{i,k-1}
				\]
				so we conclude $a_k=2$.
				Then we can cancel $a_k$, divide by three and reduce to the induction hypothesis.

				Thus our numbers are
				\[
				 	x_{1,2}=3,\quad x_{2,2}=3u_1+2,
				\]
				\[
				 	x_{1,3}=3^2,\quad x_{2,3}=3(3u_1+2),\quad x_{3,3}=3u_2+2
				\]
				and in general
				\[
					x_{j,k}=3^{k-j}(3u_{j-1}+2)
				\]
				To ensure (\ref{1}) we just take $u_{j-1}$ sufficiently big.
				To ensure (\ref{2}), we apply Lemma \ref{gelfond}.
			\end{proof}

			\begin{thm}\label{unbounded}
				The complexity deficiency of overlap-free words over an alphabet of size three is unbounded.
			\end{thm}
			\begin{proof}
				Let $d\ge 1$. We will show that there is a word $w$ of deficiency $D(w)\ge d$. Let $k=2d-1$.
				For each $1\le i\le k$ let $x_i = x_{k+1-i,k}$ where the $x_{j,k}$ are as in Lemma \ref{triple9:12:17}.
				Note that since $x_{i,k}+1<x_{i+1,k}$, we have $x_i>x_{i+1}+1$.
				Let
				\[
					w = {\left(2\prod_{i=1}^{x_1-1} t_i\right)}^2 t_{x_1}
					    {\left(2\prod_{i=1}^{x_2-1} t_i\right)}^2 t_{x_2}
					    {\left(2\prod_{i=1}^{x_3-1} t_i\right)}^2 \cdots t_{x_{k-1}}
					    {\left(2\prod_{i=1}^{x_k-1} t_i\right)}^2
				\]
				\[
					= \lambda_{1} t_{x_1} \lambda_{2} \cdots t_{x_{k-1}} \lambda_{k}
				\]
				where $\lambda_i = {(2\tau_i)}^2$, $\tau_i=\prod_{j=1}^{x_i-1} t_j$, and
				where $t_i$ is the $i$th bit of the infinite Thue--Morse word on $\{0,1\}$, which is overlap-free (Theorem \ref{thueStrong}).
				Let $M$ be the NFA with code (Theorem \ref{singly})
				\[
					[+0^{x_1-1}]0[+0^{x_2-1}]0\cdots 0*[+0^{x_k-1}],
				\]
				where $*$ indicates the accept state. Let $X=\sum_{i=1}^{k}x_i$.
				Then $M$ has $k-1+X$ edges but only $q=X$ states; and $w$ has length
				\[
					n = k-1+2X = 2(d - 1) + 2X,
				\]
				giving $n/2+1=d + X$.
				
				Suppose $v$ is a word accepted by $M$. Then $M$ on input $v$ goes through each
				loop of length $x_i$ some number of times $a_i\ge 0$, where
				\[
					k - 1 + \sum_{i=1}^k a_i x_i = \abs{v}.
				\]
				If additionally $\abs{v}=\abs{w}$, then by Lemma \ref{triple9:12:17} we have $a_1 = a_2 = \cdots = a_k$, and hence $v = w$.
				Thus
				\[
					D(w) \ge \lfloor n/2 + 1\rfloor - q = d+X-X =d.
				\]
				Below we prove that $w$ is overlap-free.
			\end{proof}
			\begin{proof}[Proof that the word $w$ in Theorem \ref{unbounded} is overlap-free.]
				Suppose a word $uu$ is contained in $w$.
			
				\textbf{Proof that the number of 2s in $uu$ is either 0 or 2.}
				Let $o_1, \ldots, o_{2a}$ denote the occurrences of 2s in $uu$ and suppose $a\ge 1$.
				Let $\delta_i = o_{i+1} - o_i$.
				Then the sequence $(\delta_1, \cdots, \delta_a)$ is an interval in the sequence
				\[
					(x_1 - 1, x_1, x_2 - 1, x_2, \ldots, x_{k - 1} - 1, x_{k - 1}, x_k - 1).
				\]
				Since $x_i > x_{i+1} + 1$, in particular $\abs{x_i - x_{i+1}}>1$ and so this sequence is injective, i.e.,
				no two entries are the same.
				But $(o_1, \cdots, o_a) = (o_{a+1} - \abs{u}, \cdots, o_{2a} - \abs{u})$.
				So $\delta_{a+1} = o_{a+2}-o_{a+1} = o_2 - o_1 = \delta_1$ which implies $a=1$.

				So either Case 1 or Case 2 below obtains.
				\textbf{Case 1: The number of 2s in $uu$ is zero.} Then certainly $uu\first(u)$ is not contained in $w$,
				since the infinite Thue--Morse word is overlap-free.
				\textbf{Case 2: The number of 2s in $uu$ is two.}
				Then we have one of the following two cases.
				\begin{enumerate}
					\item $uu$ is contained in a word of the form
						\[
							t_1 \cdots t_{x_i}\quad 2\quad t_1 \cdots t_{x_{i+1}-1}\quad 2\quad t_1 \cdots t_{x_{i+1}}.
						\]
						We guard against that by making sure that
						\begin{itemize}
							\item{} $t_{x_i}\ne t_{x_{i+1}-1}$ (Lemma \ref{triple9:12:17}) and
							\item{} $2\ne t_{x_{i+1}}$ (the Thue--Morse word uses only the letters $0$ and $1$)
						\end{itemize}
					\item $uu$ is contained in a word of the form
						\[
							t_1 \cdots t_{x_i-1}\quad 2\quad t_1 \cdots t_{x_i}\quad 2\quad t_1 \cdots t_{x_{i+1}-1}.
						\]
						Since $uu$ contains exactly two 2s and the $t_j$ are not 2s, it follows that
						$uu=a2b2c$ where $a$, $b$, $c$ are words over the
						binary alphabet $\{0,1\}$.
						Then $u=a2{b_1}={b_2}2c$ where $b={b_1}{b_2}$, so $a=b_2$, $c=b_1$ and so actually $u=a2c$ and $t_1 \cdots t_{x_i}=b=ca$.
						Here then $\abs{ca}=x_i$.
						If $\abs{a}\le 2$ then consequently
						\[
							x_i - 2 \le \abs{c} \le x_{i + 1} - 1,
						\]
						which contradicts $x_{i+1}<x_i-1$. If $\abs{a}\ge 2$ then we appeal to Lemma \ref{thue}.
				\end{enumerate}
			\end{proof}
			\begin{lem}\label{thue}
				$t_{x_i-2} t_{x_i-1}\, 2\, t_1 \cdots t_{x_i}2$ cannot be a factor of a square having only two 2s.
			\end{lem}
			\begin{proof}
				The Thue--Morse word is a concatenation of disjoint occurrences of the words 01 and 10.
				Each of these two words are of the form $z\overline z$ where $\overline z=1-z$.
				The idea now is that if $x_i$ is odd then say it ends in a lone 0 and 2, 02;
				then adding the next control bit will give something ending in 012, preventing a square.

				More precisely, since $t_1 \cdots t_{x_i-1}2$ having odd or even length ends in say
				$z\overline z2$ or $z\overline za2$ respectively,
				and then $t_1 \cdots t_{x_i-1}t_{x_i}2$ ends in $z\overline zb2$ or $z\overline z a\overline a2$, respectively;
				either way $t_1 \cdots t_{x_i-1}2$ and $t_1 \cdots t_{x_i-1}t_{x_i}2$ are incompatible.
			\end{proof}

			Definition \ref{precise} yields the following lemma.
			\begin{lem}\label{precisedef}
				Let $(q_0,q_1,\cdots)$ be the sequence of states visited by an NFA $M$ given an input word $w$.
				For any $t$, $t_1$, $t_2$, and $r_i$, $s_i$ with
				\[
					(p_1, r_1, \ldots, r_{t-2}, p_2) = (q_{t_1}, \ldots, q_{t_1+t})
				\]
				and
				\[
					(p_1, s_1, \ldots, s_{t-2}, p_2) = (q_{t_2}, \ldots, q_{t_2+t}),
				\]
				we have $r_i=s_i$ for each $i$.
			\end{lem}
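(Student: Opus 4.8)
The plan is to exploit directly the uniqueness of the accepting path guaranteed by Definition \ref{precise}, via a path-splicing argument. Write $\pi=(q_0,q_1,\ldots,q_n)$ for the unique accepting path of $M$ on $w$, where $n=\abs{w}$. The hypotheses hand us two windows of $\pi$ of the same length, namely the block $A=(q_{t_1},\ldots,q_{t_1+t})$ and the block $B=(q_{t_2},\ldots,q_{t_2+t})$, both of which begin at the common state $p_1$ and end at the common state $p_2$; the $r_i$ are the interior states of $A$ and the $s_i$ those of $B$. If $t_1=t_2$ there is nothing to prove, so assume $t_1\ne t_2$.

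First I would form the spliced sequence $\pi'$ obtained from $\pi$ by deleting the block $A$ and inserting $B$ in its place, explicitly
\[
	\pi'=(q_0,\ldots,q_{t_1-1},\ q_{t_2},q_{t_2+1},\ldots,q_{t_2+t},\ q_{t_1+t+1},\ldots,q_n).
\]
The key point is that $\pi'$ is again a legal walk in $M$. Every consecutive pair lying strictly inside one of the three retained blocks is already an edge of $\pi$, hence an edge of $M$; and the two new junctions are harmless precisely because the endpoints agree as states. Since $q_{t_2}=p_1=q_{t_1}$, the junction pair $(q_{t_1-1},q_{t_2})$ is literally the edge $(q_{t_1-1},q_{t_1})$ of $\pi$, and since $q_{t_2+t}=p_2=q_{t_1+t}$, the junction pair $(q_{t_2+t},q_{t_1+t+1})$ is the edge $(q_{t_1+t},q_{t_1+t+1})$ of $\pi$. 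Because $A$ and $B$ have the same length, $\pi'$ has exactly $n$ edges, starts at the initial state $q_0$, and ends at the accepting state $q_n$; thus $\pi'$ is an accepting path of length $\abs{w}$.

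Now I would invoke uniqueness. By Definition \ref{precise} there is only one accepting path of length $\abs{w}$ in $M$, so either $\pi'$ coincides with $\pi$ or we have produced a second such path, a contradiction. Hence $\pi'=\pi$ as sequences of states, and comparing the two sequences entry by entry at the positions occupied by the inserted block gives $q_{t_2+i}=q_{t_1+i}$ for every $i$, that is $s_i=r_i$, which is exactly the assertion.

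I do not expect a genuine obstacle here: the entire content is the verification that the splice yields a valid accepting path. The only things to be careful about are that $\pi'$ reuses only edges already present in $\pi$ (so the standing assumption that $M$ carries no superfluous edges is not even needed), and that overlapping windows cause no difficulty—the displayed concatenation defining $\pi'$ is meaningful verbatim whether or not $A$ and $B$ overlap inside $\pi$, since it is built solely from the state identifications $q_{t_2}=q_{t_1}$ and $q_{t_2+t}=q_{t_1+t}$.
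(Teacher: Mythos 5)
Your proposal is correct and is exactly the argument the paper has in mind: the paper states only that Definition \ref{precise} ``yields'' the lemma, leaving implicit the splice-and-invoke-uniqueness step that you spell out. Your verification that the spliced sequence is an accepting path of length $\abs{w}$ (crucially, uniqueness in Definition \ref{precise} is over all accepting paths of that length, not only those labeled by $w$) is precisely the intended content, so there is nothing to add.
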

			Note that in Lemma \ref{precisedef}, it may very well be that $t_1\ne t_2$.

			\begin{thm}\label{main}
				Overlap-free binary words have deficiency bound 1.
			\end{thm}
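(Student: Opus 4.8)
The plan is to fix a minimal NFA $M$ uniquely accepting $x$ (so $M$ has $q=A_N(x)$ states and, by minimality, every state lies on the accepting path) and to reduce the deficiency bound to a purely combinatorial count of revisits. Let $s$, $d_2$, $d_1$ be the numbers of states visited exactly three, two, and one times during the computation on $x$. By Theorem \ref{nfa9fact} and strong cube-freeness no state is visited four or more times (a fourth visit would force a cube), so these three classes exhaust the states. Counting states and counting visits gives $q=s+d_2+d_1$ and $n+1=3s+2d_2+d_1$; eliminating $d_2$ yields $n+1-2q=s-d_1$, hence
\[
	D(x)=\left\lfloor\frac{n}{2}\right\rfloor+1-q=\left\lfloor\frac{s-d_1+1}{2}\right\rfloor .
\]
So the theorem is equivalent to the inequality $s\le d_1+2$, and this is the statement I would aim for. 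As a sanity check, a {\squarefree} word admits no triple-visited state, so $s=0$ and $D(x)=0$, recovering Theorem \ref{nfa9square}.

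Next I would pin down the local structure forced by a triple-visited state $p$. Writing $w=w_0w_1w_2w_3$ for the factorization at the three visits as in Theorem \ref{nfa9fact}, the commuting relation $w_1w_2=w_2w_1$ and Lemma \ref{lyndon:schuetzenberger} make $w_1,w_2$ powers of a common word $c$; cube-freeness forbids $c^3$ and forces $w_1=w_2=c$, so $x$ contains the square $cc$ at $p$. By Lemma \ref{precisedef} the two segments from $p$ to $p$ visit identical state sequences, so $p$ is the basepoint of a single cycle of length $\abs{c}$ traversed exactly twice, whose non-basepoint states are at least doubly visited. Strong cube-freeness rules out weak cubes, so the letter after $cc$ differs from $\text{first}(c)$, and Theorem \ref{soni} restricts the loop length to $\abs{c}\in\{2^a,3\cdot2^a\}$. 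Two distinct triple-visited basepoints thus lie on two distinct doubly-traversed cycles; were those cycles to share a state during disjoint time-intervals, that state would be visited four times, again producing a cube, so the cycles' time-intervals are laminar (nested or disjoint).

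The heart of the argument, and the step I expect to be the main obstacle, is to convert these constraints into $s\le d_1+2$. The extra leverage is uniqueness: if $M$ contains doubly-traversed cycles of lengths $\ell_1,\ldots,\ell_s$ in series, every accepted word traverses the $i$th cycle some number $a_i\ge 0$ of times with $\sum_i a_i\ell_i$ held fixed, so the actual vector $(2,\ldots,2)$ must be the only admissible nonnegative solution of $\sum_i a_i\ell_i=2\sum_i\ell_i$; otherwise $M$ accepts a second word of length $n$. Combining this Diophantine rigidity with the Shelton--Soni restriction $\ell_i\in\{2^a,3\cdot2^a\}$ (which already rules out, for instance, two loops of equal length, since interchanging one traversal then yields a distinct accepted word of length $n$), together with the laminar structure and the weak-cube-free constraints on the bordering letters, I would argue that one cannot accommodate more than two independent square-loops without forcing a second accepted word, a cube or weak cube, or detour states that must be singly visited and hence raise $d_1$. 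Making this charging precise --- showing that each square-loop beyond the allotment is paid for by a singly-visited state --- is the delicate part, and I expect it to require the fine combinatorial structure of overlap-free binary words (for example the Thue--Morse morphism factorization) beyond the counting and Diophantine arguments alone.
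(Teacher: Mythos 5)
Your reduction is sound as far as it goes: with strong cube-freeness ruling out quadruply visited states via Theorem \ref{nfa9fact}, the visit-counting identity $D(x)=\lfloor(s-d_1+1)/2\rfloor$ is correct, and it is in fact a sharper form of the pigeonhole step the paper borrows from the proof of Theorem \ref{almostMain} (there, $D\ge 2$ forces either a fourth visit or three \emph{state cubes}). But your proof stops exactly where the theorem lives: you never establish $s\le d_1+2$, and you explicitly defer the ``charging'' of extra square-loops to singly-visited states, predicting it will need the fine combinatorial structure of overlap-free words. That deferred step is the entire content of the theorem, so as it stands there is a genuine gap.

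Moreover, the charging framework is the wrong target: the paper proves the much cleaner fact that three state cubes are outright impossible --- $s\le 2$ unconditionally --- using only ingredients you already assembled. By Theorem \ref{soni} the three square lengths lie in $\{2^a\}\cup\{3\cdot 2^a\}$, so two of them have the same form and hence one divides the other. If those two state cubes are disjoint in time, your own ``Diophantine rigidity'' observation finishes immediately: replace one traversal of the longer loop by the appropriate number of traversals of the shorter loop, keeping total length $n$ fixed, to get a second accepting path of length $n$, contradicting Lemma \ref{precisedef}; comparability of two loop lengths already defeats uniqueness, with no laminarity or in-series structure needed. If instead they overlap, then (choosing the state cubes of minimal length, so $p_2$ is visited exactly once in each excursion at $p_1$) writing the $p_1$-loop as $a=rs$, split at the visit to $p_2$, forces the $p_2$-loop to read $sr$, and $w$ then contains $rsrsr$ --- a weak cube, contradicting strong cube-freeness. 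This overlap case is the one place where \emph{strong} cube-freeness, as opposed to plain cube-freeness, does real work; in your sketch weak cubes appear only in constraining the letter after $cc$, which is why you were led to expect Thue--Morse morphism machinery that the argument does not in fact require.
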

			\begin{proof}
				Suppose $w$ is a word satisfying $D(w)\ge 2$ and consider the sequence of states visited in a witnessing computation.
				As in the proof of Theorem \ref{almostMain}, either there is a state that is visited four times, and hence there is a cube in $w$,
				or there are three \emph{state cubes} (states that are visited three times each), and hence there are three squares in $w$.
				By Theorem \ref{soni}, a overlap-free binary word can only contain squares of length $2^a$, $3\cdot 2^a$,
				and hence can only contain powers $u^i$ where $\abs{u}$ is of the form $2^a$, $3\cdot 2^a$, and $i\le 2$.

				In particular, the length of one of the squares in the three state cubes must divide the length of another.
				So if these two state cubes are disjoint then the shorter one repeated
				can replace one occurrence of the longer one, contradicting Lemma \ref{precisedef}.

				So suppose we have two state cubes, at states $p_1$ and $p_2$, that overlap.
				At $p_1$ then we read consecutive words $ab$ that are powers $a=u^i$, $b=u^j$ of a word $u$, and since
				there are no cubes in $w$ it must be that $i=j=1$ and so actually $a=b$.
				And at $p_2$ we have words $c$, $d$ that are powers of a word $v$ and again the exponents are 1 and $c=d$.

				The overlap means that in one of the two excursions of the same length starting and ending at $p_1$,
				we visit $p_2$. By uniqueness of the accepting path
				we then visit $p_2$ in both of these excursions.
				If we suppose the state cubes are chosen to be of minimal length then we only visit $p_2$ once in each excursion.
				If we write $a=rs$ where $r$ is the word read when going from $p_1$ to $p_2$, and $s$ is the word going from $p_2$ to $p_1$, then
				$c=sr$ and $w$ contains $rsrsr$. In particular, $w$ contains an overlap.
			\end{proof}
			\begin{rem}
				In computability theory, the effective Hausdorff dimension $\mathrm{dim}$ and effective packing dimension $\mathrm{Dim}$ of
				a single infinite binary sequence $\mathbf u$ are defined, and related to Kolmogorov complexity $C$.
				It is shown (see \cite[Theorem 13.3.4 and Corollary 13.11.12]{DH}) that
				\[
					\mathrm{dim}(\mathbf u) = \liminf_n\frac{C(u_1 \cdots u_n)}{n},\text{ and}
					\quad \mathrm{Dim}(\mathbf u) = \limsup_n\frac{C(u_1 \cdots u_n)}{n}.
				\]
				These results, together with the idea that automatic complexity is a miniaturization of Kolmogorov complexity,
				constitute our motivation for making Definitions \ref{daHd} and \ref{naHd} below.
			\end{rem}
			\begin{df}\label{daHd}
				For an infinite word $\mathbf u$ define the \emph{deterministic automatic Hausdorff dimension} of $\mathbf u$ by
				\[
					I(\mathbf u)=\liminf_{u\mathrm{\ prefix\ of\ }\mathbf u} \frac{A_D(u)}{\abs{u}}.
				\]
				and the \emph{deterministic automatic packing dimension} of $\mathbf u$ by
				\[
					S(\mathbf u)=\limsup_{u\mathrm{\ prefix\ of\ }\mathbf u} \frac{A_D(u)}{\abs{u}}.
				\]
			\end{df}
			The connection between effective dimension and automatic dimension is not merely by analogy, as Theorem \ref{christmas2014} shows.
			\begin{thm}\label{christmas2014}
				If $\mathbf x$ is an infinite word with $\mathrm{dim}(\mathbf x)>0$, then $I(\mathbf x)>0$.
			\end{thm}
			\begin{proof}
				This follows from the Kolmogorov complexity calculation in \cite[Theorem 9]{MR1897300}.
			\end{proof}
			For nondeterministic complexity, in light of Theorem \ref{Hyde} it is natural to make the following definition.
			\begin{df}\label{naHd}
				Define the \emph{nondeterministic automatic Hausdorff dimension} of $\mathbf u$ by
				\[
					I_N(\mathbf u)=2\cdot\liminf_{u\mathrm{\ prefix\ of\ }\mathbf u} \frac{A_N(u)}{\abs{u}}
				\]
				and define $S_N$ analogously.
			\end{df}
			\begin{thm}[Shallit and Wang's Theorem 18]\label{toStrengthen}
				$\frac13\le I(\mathbf t)\le S(\mathbf t)\le\frac23$.
			\end{thm}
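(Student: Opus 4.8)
The plan is to prove the two inequalities separately, using cube-freeness of $\mathbf t$ for the lower bound and the self-similar (morphic) structure of $\mathbf t$ for the upper bound.

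For the lower bound $I(\mathbf t)\ge\frac13$, fix a prefix $u$ of $\mathbf t$ of length $n$ and let $M$ be a DFA with $q=A_D(u)$ states witnessing the deterministic complexity. Being deterministic and accepting exactly one word of length $n$, $M$ is in particular an NFA having a unique accepting path of length $n$, so Theorem \ref{nfa9fact} applies to it. Its accepting computation on $u$ passes through $n+1$ states counted with multiplicity. If some state were visited as many as four times, then by Theorem \ref{nfa9fact} (with $k=3$) the word $u$ would contain a cube; but $\mathbf t$ is strongly {\cubefree} (Theorem \ref{thueStrong}), and a cube contains a weak cube as a prefix, so $\mathbf t$, and hence $u$, is {\cubefree}. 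Thus every state is visited at most three times, giving $n+1\le 3q$, i.e. $A_D(u)=q\ge (n+1)/3$. Therefore $A_D(u)/\abs{u}>\frac13$ for \emph{every} prefix $u$, and taking the liminf yields $I(\mathbf t)\ge\frac13$.

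For the upper bound $I(\mathbf t)\le\frac23$, I would exhibit a single family of prefixes whose complexity-to-length ratio equals $\frac23$. Let $\mu$ be the Thue--Morse morphism $\mu(0)=01$, $\mu(1)=10$, whose fixed point is $\mathbf t$; then $\mu(\mathbf t)=\mathbf t$, each $\mu^a(c)$ has length $2^a$, and $\mu$ carries prefixes of $\mathbf t$ to prefixes of $\mathbf t$. Since $011$ is a prefix of $\mathbf t$, the word
\[
	u_a = \mu^a(011) = \mu^a(0)\,\mu^a(1)\,\mu^a(1)
\]
is a prefix of $\mathbf t$ of length $3\cdot 2^a$ that ends in the square $yy$ with $y=\mu^a(1)$ of length $2^a$. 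I would build a DFA $M_a$ that reads $\mu^a(0)$ along a simple path of $2^a$ edges ending at a state $p$, then reads $y$ around a single loop of length $2^a$ based at $p$, with $p$ the accept state. This uses $(2^a+1)+(2^a-1)=2^{a+1}$ states. Every accepting computation starts at the initial state, traverses the path (forced, reading $\mu^a(0)$), and then loops $k\ge0$ times (each loop forced, reading $\mu^a(1)$), so the accepted words are exactly $\mu^a(0)\,(\mu^a(1))^k$, of length $(k+1)2^a$; length $3\cdot 2^a$ forces $k=2$ and hence the unique word $u_a$. Thus $A_D(u_a)\le 2^{a+1}$ and $A_D(u_a)/\abs{u_a}\le 2^{a+1}/(3\cdot 2^a)=\frac23$, so $I(\mathbf t)\le\liminf_a A_D(u_a)/\abs{u_a}\le\frac23$.

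The main obstacle is the upper bound, specifically verifying that $M_a$ uniquely accepts $u_a$ among words of its length. Determinism itself is automatic, since in the construction every state has a single outgoing edge; the substantive point is the structural argument above showing that the only transitions entering $p$ come from the end of the path and the end of the loop, so that each accepting run is a path followed by a fixed number of loop traversals and only $k=2$ produces the correct length. A secondary point to record is the reduction used in the lower bound, namely that a DFA uniquely accepting $u$ among length-$n$ strings yields a unique accepting path of length $n$ and so is a legitimate input to Theorem \ref{nfa9fact}; this is immediate from determinism, since distinct accepting runs of length $n$ would read distinct accepted words of length $n$. (Only $\liminf\le\frac23$ is required, so the single family $u_a$ suffices and no search for still-smaller ratios is needed.)
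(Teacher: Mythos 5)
The paper does not actually prove this statement: it is quoted as Shallit and Wang's Theorem 18 and used as a citation, so your proposal is a blind reconstruction, and it is a correct one, built from the same toolkit the paper uses for its own results. Your lower bound is exactly the argument pattern of Theorems \ref{nfa9square} and \ref{main}: a DFA witnessing $A_D(u)$ has a unique accepting path of length $n$ (your determinism remark correctly licenses applying Theorem \ref{nfa9fact}), and since strong cube-freeness of $\mathbf t$ implies cube-freeness (a cube $uuu$ begins with the weak cube $uu\,\text{first}(u)$), no state is visited four times, so $n+1\le 3q$ and $I(\mathbf t)\ge\frac13$. Your upper bound via $u_a=\mu^a(011)=\mu^a(0)\,\mu^a(1)\,\mu^a(1)$ with a path feeding a loop of length $2^a$ is the same device the paper uses in the cube-free deficiency theorem (the code ${[w_1\ldots w_{\ell-1}]}_{w_\ell}$), and your uniqueness check is sound: every state has exactly one outgoing edge, so the accepted language is $\{\mu^a(0)(\mu^a(1))^k : k\ge 0\}$ and only $k=2$ yields length $3\cdot 2^a$. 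One definitional caveat: Shallit and Wang's $A_D$ is defined with complete DFAs (hence the paper's remark about dead states in the introduction), so your $M_a$ needs one additional dead state, giving $A_D(u_a)\le 2^{a+1}+1$ and a ratio slightly above $\frac23$ for each fixed $a$; since the ratio tends to $\frac23$ and the liminf over all prefixes is at most the liminf over your subfamily, the conclusion $I(\mathbf t)\le\frac23$ is unaffected.
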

			We are now ready to strengthen Theorem \ref{toStrengthen}.
			\begin{thm}
				$I(\mathbf{t})= \frac12$, and $I_N(\mathbf t)=S_N(\mathbf t)=1$.
			\end{thm}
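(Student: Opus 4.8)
The plan is to read off both claims directly from the deficiency bound for strongly {\cubefree} binary words (Theorem \ref{main}) together with the universal upper bound (Theorem \ref{Hyde}); the only genuinely new ingredient is the comparison $A_N(u)\le A_D(u)$. First I would record the key numerical consequence. Since $\mathbf t$ is strongly {\cubefree} (Theorem \ref{thueStrong}), every prefix $u$ of $\mathbf t$ is itself a strongly {\cubefree} binary word, so Theorem \ref{main} gives $D(u)\le 1$. Writing $n=\abs u$ and unfolding the definition of deficiency, $A_N(u)=b(n)-D(u)\ge b(n)-1=\lfloor n/2\rfloor$, while Theorem \ref{Hyde} gives $A_N(u)\le b(n)=\lfloor n/2\rfloor+1$. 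Thus for every prefix,
\[
	\lfloor n/2\rfloor \le A_N(u)\le \lfloor n/2\rfloor+1.
\]

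For the nondeterministic dimension I would simply divide through by $n/2$. Both $\frac{\lfloor n/2\rfloor}{n/2}$ and $\frac{\lfloor n/2\rfloor+1}{n/2}$ tend to $1$ as $n\to\infty$, so $A_N(u)/(\abs u/2)\to 1$ along the sequence of all prefixes of $\mathbf t$. Hence both the $\liminf$ and the $\limsup$ defining $I_N(\mathbf t)$ and $S_N(\mathbf t)$ equal $1$, which gives $I_N(\mathbf t)=S_N(\mathbf t)=1$.

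For the deterministic dimension I would first establish the comparison $A_N(u)\le A_D(u)$. The point is that a DFA accepting $u$ as the unique word of length $n$ in its language is, in particular, an NFA with no $\epsilon$-transitions in which the computation on any length-$n$ input traces out the unique path of that length labeled by that input; uniqueness of $u$ among accepted length-$n$ strings therefore means the accepting path on $u$ is the only accepting path of length $n$, so the DFA is an admissible witness in the sense of Definition \ref{precise}. Consequently $A_D(u)\ge A_N(u)\ge\lfloor n/2\rfloor$, and dividing by $n=\abs u$ gives $A_D(u)/\abs u\ge \lfloor n/2\rfloor/n\to \tfrac12$, whence $I(\mathbf t)=\liminf_{u} A_D(u)/\abs u\ge \tfrac12$. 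Every step here is short bookkeeping with floors; the one point requiring genuine care — and the main (albeit minor) obstacle — is verifying that the DFA's uniqueness \emph{among length-$n$ strings} really does translate into uniqueness of the accepting path in the nondeterministic sense, since this is what licenses treating a deterministic witness as a nondeterministic one.
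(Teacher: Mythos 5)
Your proposal is correct and takes essentially the same route as the paper: both parts are read off from Theorem \ref{main} (deficiency at most $1$ for strongly cube-free binary words, applied to the prefixes of $\mathbf t$, which gives $A_N(u)\ge\lfloor n/2\rfloor$) combined with the upper bound of Theorem \ref{Hyde}. The only cosmetic difference is in the deterministic part: the paper notes that the proof of Theorem \ref{main} applies verbatim to deterministic complexity, whereas you transfer the nondeterministic lower bound via the inequality $A_N(u)\le A_D(u)$ --- which you correctly justify by observing that a DFA witness has exactly one accepting path of length $n$ and is therefore an admissible NFA witness in the sense of Definition \ref{precise}.
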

			\begin{proof}
				The inequality $I(\mathbf t)\ge \frac12$ and the fact that $I_N(\mathbf t)=S_N(\mathbf t)=1$ follow from the observation that
				the proof of Theorem \ref{main} applies equally for deterministic complexity.
				The inequality $I(\mathbf t)\le \frac12$ was already implicit in the proof of \cite[Theorem 18]{MR1897300}.
				Let $T(m)=t_0 \cdots t_{m-1}$.
				In the table they give, with $m=2^{2n+1}$, we read off the inequality $A_D(T(m)) \le m + 3 - 2^{2n} = \frac{m}2 +3$.
			\end{proof}
		\subsection{Almost {\squarefree} words}
			\begin{df}[Fraenkel and Simpson \cite{MR1309124}]
				A word whose square factors all belong to the set $\{00, 11, 0101\}$ is called \emph{almost {\squarefree}}.
			\end{df}

			\begin{thm}\label{almostMain}
				A word that is almost {\squarefree} has a deficiency bound of 1.
			\end{thm}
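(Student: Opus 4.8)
The plan is to assume $D(w)\ge 2$ for an almost square-free word $w$ of length $n$ and derive a contradiction, following the skeleton that the proof of Theorem~\ref{main} already refers back to. Fix a witnessing NFA $M$ with $q=A_N(w)\le\lfloor n/2\rfloor-1$ states and consider the states $q_0,\dots,q_n$ visited on input $w$. Writing $v_p$ for the number of visits to $p$, we have $\sum_p(v_p-1)=n+1-q$. Let $n_1,n_2$ be the numbers of states visited exactly once and exactly twice, and $S_3$ the set of states visited exactly three times. If no state is visited four or more times then $q=n_1+n_2+\abs{S_3}$ with $n_1\ge 0$, so $n_2+\abs{S_3}\le q$, and subtracting this from $n_2+2\abs{S_3}=n+1-q$ gives $\abs{S_3}\ge n+1-2q\ge 3$. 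Thus either some state is visited $\ge 4$ times, or there are at least three state cubes (states visited exactly three times).

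Next I would record what almost square-freeness forces through Theorem~\ref{nfa9fact}. No state can be visited five times: that would yield a fourth power $c^4$, and the only candidates allowed by the square constraint, $0000$ and $01010101$, each contain a forbidden square. A state visited exactly four times yields a cube $c^3$, and since $(01)^3=010101$ contains the forbidden square $1010$, this cube must be $000$ or $111$; hence it comes from a single self-loop of length $1$ traversed three consecutive times. Finally, each state cube yields a square $uu\in\{00,11,0101\}$, so its loop word $u$ has length $\ell\in\{1,2\}$.

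The engine of the proof is a loop-exchange producing a second accepting path of length $n$, contradicting the uniqueness in Definition~\ref{precise}. I would isolate two short loops at distinct states with one length dividing the other, arranged to be disjoint. In the four-visit case the length-$1$ self-loop at $p$ pairs with a second state visited $\ge 3$ times — such a state must exist, since otherwise the count above forces $p$ to be visited $\ge 5$ times — whose loop has length $\le 2$ and cannot run through $p$ (that would overvisit $p$); as the self-loop occupies three consecutive visits to $p$, the two loops are disjoint. In the three-state-cube case, two cubes share a period $\ell\in\{1,2\}$ by pigeonhole. Whenever the two loops are disjoint I reroute: repeat the shorter loop enough extra times to cover one traversal of the longer and delete that traversal. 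The length is unchanged and the path differs, a contradiction.

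The remaining and, I expect, hardest case is overlapping loops, which can only occur for period $2$: a length-$1$ self-loop passes through no other state, so two such loops at distinct states are automatically disjoint. Here I would argue as in Theorem~\ref{main}. The overlap means the state of one cube is visited inside an excursion of the other; Lemma~\ref{precisedef} propagates this to both excursions, and writing the period-$2$ loop word as $rs$ (with $r$ read from $p_1$ to $p_2$ and $s$ from $p_2$ back) exhibits the factor $rsrsr$. Because the only period-$2$ square is $0101$, we have $rs=01$, so this factor is $01010$, which contains the forbidden square $1010$. The main difficulty is the bookkeeping of the overlap geometry — disjoint versus nested versus crossing excursions — needed to guarantee that the exchange and weak-cube arguments jointly exhaust all configurations; the genuinely new point beyond Theorem~\ref{main} is that cubes $000$ and $111$ are now permitted, so the four-visit case cannot simply be ruled out and must instead be fed into the loop-exchange.
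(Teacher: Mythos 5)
Your proposal is correct and follows essentially the same route as the paper's own proof: pigeonhole on state visits (the paper packages your counting argument as an ``Extended Pigeonhole Principle''), Theorem \ref{nfa9fact} to extract powers, the almost-square-free constraint to force loop words into $\{0,1,01\}$, loop-exchange surgery contradicting uniqueness of the accepting path for disjoint loops, and the $rsrsr$-overlap analysis (which the paper spells out as its full/partial-overlap Subsubcases 3.2.2--3.2.3, reaching the same $1010$/fourth-power contradictions). Your reorganization --- folding the paper's Case 2 into a unified exchange after showing the four-visit cube must be a consecutive self-loop, and borrowing the overlap argument from Theorem \ref{main} --- is a cosmetic streamlining, not a different method.
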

			\begin{proof}
				It is easy to verify for words of length at most 3.
				Suppose now $w$ has length at least 4.
				Suppose $w$ is a word of a length $n \in \{2k, 2k+1\}$ where $k\ge 2$, with deficiency at least 2.
				Then there are $q = k - 1\ge 1$ states occupied at $n + 1$ times.
				So $n+1 \in \{2k+1, 2k+2\} = \{2q+3, 2q+4\}$ times.
				There are at least $2q + 3$ times and only $q$ states, so
				by the Extended Pigeonhole Principle (Theorem \ref{php}), we are in one of the following
				Cases 1--3.
				\begin{itemize}
					\item{}
						Case 1. There is at least one state that is visited at least 5 times.
						Then by Theorem \ref{nfa9fact}, \textbf{$w$ contains a fourth power}.
					\item{}
						Case 2. There is at least one state $p_1$ that is visited at least 4 times
						and another state $p_2\ne p_1$ that is visited at least 3 times.
						Then by Theorem \ref{nfa9fact}, there is a cube $xxx$ and a square $yy$ in $w$.
						Since \textbf{$w$ has no squares of length $>4$}, we must have
						$\abs{xx}\le 4$ and $\abs{yy}\le 4$ and hence $1\le \abs{x}\le 2$ and $1\le \abs{y}\le 2$.
						We next consider possible lengths of $x$ and $y$.
						\begin{itemize}
							\item{}
								Subcase $\abs{x}= 2$. Say $x=ab$ where $\abs{a}=\abs{b}=1$.
								If $a\ne b$ then $xxx\in\{101010,010101\}$ so 1010 occurs in $w$, but \textbf{$w$ does not contain 1010};
								if $a=b$ then 0000 or 1111 occurs in $w$, contra assumption.
							\item{} Subcase $\abs{x}=1$, $\abs{y}= 2$:
								In this case, the $xxx$ and $yy$ occurrences must be disjoint,
								because the states in a $yy$ occurrence are $p_2 p_3p_2p_3p_2$ for some
								$p_3$ which must be disjoint from $p_1p_1p_1p_1$ when $p_1\ne p_2$.
								But then we can replace these by $p_2p_3p_2p_3p_2p_3p_2$ and $p_1p_1$, respectively,
								giving two distinct state sequences leading to acceptance, contradicting Lemma \ref{precisedef}.
							\item{}
								Subcase $\abs{x}=1$, $\abs{y}=1$:
								In this case again the occurrences of $xxx$ and $yy$ must be disjoint, since $p_1\ne p_2$.
								We can replace $p_1^4$ and $p_2^3$ by $p_1$ and $p_2^6$, respectively,
								again contradicting Lemma \ref{precisedef}.
						\end{itemize}
						\item{}
						Case 3. There are at least 3 states $p_1$, $p_2$, $p_3$ (all distinct) that are each visited at least 3 times.
						Then by Theorem \ref{nfa9fact}, there are three squares ${u_i}{u_i}$ at three distinct states $p_i$, $1\le i\le 3$.
						By assumption $\abs{{u_i}{u_i}}\le 4$ so $\abs{u_i}\le 2$.
						\begin{itemize}
							\item{}
								Subcase 3.1. $\abs{u_i}=\abs{u_j}=1$ for two values $1\le i< j\le 3$.
								Then the argument is entirely analogous to that in Case 2.
							\item{} Subcase 3.2
								$\abs{u_j}=\abs{u_k}=2$ for two values $1\le j < k\le 3$.
						\begin{itemize}
							\item{} Subsubcase 3.2.1.
								If disjoint, we can replace $u_j^2$ by $u_k^2$ to get $u_k^4$, again a \textbf{fourth power},
								by the argument of Subcase 3.1.
							\item{} Subsubcase 3.2.2.
								If nondisjoint with full overlap then
								\[
									{p_j}{a_1}{p_j}{a_2}{p_j}
								\]
								and
								\[
									{p_k}{b_1}{p_k}{b_2}{p_k}
								\]
								become
								\[
									{p_j}{p_k}{p_j}{p_k}{p_j}{p_k}
								\]
								and
								immediately we get \textbf{$10101$ or $01010$ or a fourth power in $w$};
							\item{} Subsubcase 3.2.3.
								If partial overlap only then ${p_j}{a_1}{p_j}{a_2}{p_j}$ and ${p_k}{b_1}{p_k}{b_2}{p_k}$ become,
								by Lemma \ref{precisedef},
								${p_j}a {p_j}a {p_j}$ and ${p_k}b {p_k}b {p_k}$ and then
								\[
									{p_j} a {p_j}{p_k}{p_j}{p_k} b {p_k}
								\]
								By Lemma \ref{precisedef} again, this must be
								\[
									{p_j}{p_k} {p_j}{p_k} {p_j}{p_k} {p_j}{p_k} = {({p_j}{p_k})}^4
								\]
								and so the read word must be of the form $abababa$,
								giving \textbf{an occurrence of $1010$ (if $a\ne b$) or of a 7th power (if $a=b$) in $w$}.
						\end{itemize}
					\end{itemize}
				\end{itemize}
				Thus all cases are covered and the Theorem is proved.
			\end{proof}

			\begin{cor}
				There is an infinite binary word having hereditary deficiency bound of 1.
			\end{cor}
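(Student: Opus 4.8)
The plan is to mirror the proof of the square-free corollary earlier in this section: exhibit a single infinite binary word that is almost {\squarefree}, and then apply Theorem \ref{almostMain} to each of its prefixes. The only ingredient needed beyond Theorem \ref{almostMain} is the existence of such an infinite word, which I would take directly from Fraenkel and Simpson \cite{MR1309124}: they construct an infinite binary word $\mathbf w$ whose only square factors are $00$, $11$, and $0101$. By the definition of almost {\squarefree}, this $\mathbf w$ is itself almost {\squarefree}.

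The key observation that upgrades ``$D\le 1$'' into a \emph{hereditary} deficiency bound is that almost {\squarefree}ness is inherited by factors. If $u$ is a prefix (indeed, any factor) of $\mathbf w$, then every square occurring in $u$ also occurs in $\mathbf w$, so the set of squares contained in $u$ is a subset of $\{00,11,0101\}$; hence $u$ is again almost {\squarefree}. Applying Theorem \ref{almostMain} to each prefix $u$ of $\mathbf w$ then yields $D(u)\le 1$ for every prefix, which is precisely the statement that $\mathbf w$ has hereditary deficiency bound $1$.

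I expect no real difficulty in this reduction itself; the single substantive ingredient is the existence of the infinite almost {\squarefree} binary word, and that is exactly the Fraenkel--Simpson result already cited when the notion was introduced. Were one to insist on a self-contained argument, the hard part would be producing $\mathbf w$ explicitly — typically as a fixed point of a suitable substitution, together with a verification that no forbidden square ever appears — but since the statement is phrased as a corollary of Theorem \ref{almostMain}, invoking \cite{MR1309124} suffices.
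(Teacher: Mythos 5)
Your proposal is correct and matches the first of the paper's two proofs essentially verbatim: invoke the Fraenkel--Simpson infinite almost \squarefree{} binary word, note that almost \squarefree ness passes to prefixes (since squares in a factor are squares in the word), and apply Theorem \ref{almostMain}. The paper additionally records a second route you did not need --- the infinite Thue word is strongly \cubefree{} (Theorem \ref{thueStrong}), so Theorem \ref{main} gives the same hereditary bound --- but your argument is complete as it stands.
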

			\begin{proof}
				We have two distinct proofs. On the one hand,
				Fraenkel and Simpson \cite{MR1309124} show there is an infinite almost {\squarefree} binary word,
				and the result follows from Theorem \ref{almostMain}.
				On the other hand, the infinite Thue--Morse word is overlap-free (Theorem \ref{thueStrong})
				and the result follows from Theorem \ref{main}.
			\end{proof}
			\begin{con}\label{evidence}
				There is an infinite binary word having hereditary deficiency 0.
			\end{con}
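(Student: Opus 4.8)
The plan is to obtain the word by a compactness argument. Let
\[
	T = \{\, w\in\{0,1\}^* : D(u)=0 \text{ for every prefix } u\preceq w \,\}.
\]
By construction $T$ is prefix-closed, and it contains the empty word (since $D(\eps)=0$); thus $T$ is a subtree of the infinite binary tree. Being finitely branching, if $T$ is infinite then by K\"onig's lemma it has an infinite branch, and that branch is exactly an infinite binary word all of whose prefixes have deficiency $0$. Hence the whole problem reduces to the single assertion that $T$ contains words of every length, i.e. that $T$ is infinite.

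To control membership in $T$ one tracks the two quantities in $D(wa)=b(\abs{wa})-A_N(wa)$ as a letter is appended. Since Theorem \ref{Hyde} gives $A_N(wa)\le b(\abs{wa})$ automatically, staying in $T$ at the step $w\mapsto wa$ is equivalent to ruling out every NFA with at most $b(\abs{wa})-1$ states that uniquely accepts $wa$; by the structure theory of Section 2 (Lemma \ref{types}) together with Theorem \ref{nfa9fact}, any such sub-canonical automaton forces a repeated state and hence an exploitable square in $wa$. Moreover, since $w\in T$ forbids any sub-canonical automaton for $w$ itself, the repetition exploited by a drop at $wa$ must involve the final letter $a$, and so is a suffix square ending in $a$. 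The key structural input is therefore a description of exactly which suffix squares can lower the complexity below $b$. I would first examine how membership in $T$ restricts the power structure of $w$: the arguments in the proofs of Theorems \ref{main} and \ref{almostMain} suggest that an exploitable weak cube or divisibility-linked pair of squares in a prefix should force that prefix below the canonical bound, so that $T$-words are plausibly strongly {\cubefree}. If so, the Shelton--Soni classification (Theorem \ref{soni}) confines the lengths of their squares to $\{2^a,3\cdot 2^a\}$ and Lemma \ref{cubeContainsCube} pins down how two such squares may overlap.

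With this in hand, the surviving task is to prove that $T$ does not die out. The cleanest sufficient condition would be an extension lemma --- that every $w\in T$ has at least one child $w0$ or $w1$ in $T$ --- which if true yields an infinite branch directly, without K\"onig's lemma. Proving it amounts to showing that the two appended letters cannot \emph{both} create an exploitable square: a drop for $w0$ would produce a suffix square ending in $0$ and a drop for $w1$ a suffix square ending in $1$, both long suffixes of the same word $w$ and hence heavily overlapping, so that combining them through Lemma \ref{lyndon:schuetzenberger} and the restricted square lengths should force a forbidden weak cube or cube inside $w$. The main obstacle is precisely this simultaneous-drop analysis, and I expect two complications: first, the overlap must be pushed all the way to a genuine contradiction despite the two squares ending one letter apart; second, the extension lemma may simply be too strong, since a single node could fail to extend while $T$ stays infinite through other branches. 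The fallback is then a density/survival count --- the data of Table \ref{deficiency} suggest that roughly half of all length-$n$ strings already have $D_n=0$, and a quantitative lower bound on the number of length-$n$ members of $T$ admitting a good extension would keep the tree supercritical and again force it to be infinite. Establishing such a survival bound, or alternatively verifying hereditary deficiency $0$ for an explicit morphic candidate (where the obstacle becomes handling the squares that every infinite binary word must contain), is where the real work lies.
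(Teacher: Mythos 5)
You are attempting to prove a statement that the paper itself presents only as Conjecture \ref{evidence}: the authors give no proof, offering just numerical evidence (e.g., 108 strings of length 18 all of whose prefixes have deficiency 0). So there is no paper proof to compare against, and your proposal, by its own admission, is a research plan rather than a proof. The K\"onig's lemma framing is correct: the tree $T$ of hereditarily deficiency-0 words is prefix-closed and finitely branching, so the conjecture is equivalent to $T$ being infinite. But that reduction is the trivial direction; both of your routes to infinitude --- the extension lemma (every $w\in T$ has a child in $T$) and the fallback survival count --- are left entirely open, and the survival heuristic cannot be extracted from Table \ref{deficiency}, since a positive proportion of deficiency-0 strings at each length says nothing about how many are \emph{hereditarily} deficiency-0; that count could die out even while the per-length proportion stays near one half.

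Beyond incompleteness, one step in your plan is concretely unsound: the localization of a ``drop'' at $wa$ to a suffix square ending in $a$. Theorem \ref{nfa9fact} is one-directional --- a repeated state yields a power in the word, but a power in the word does not yield a small automaton --- so from $w\in T$ you cannot conclude that a sub-canonical automaton for $wa$ must exploit a repetition involving the final letter. The natural repair, truncating the unique accepting path of a small automaton for $wa$ to obtain a small automaton for $w$ (which would contradict $D(w)=0$ unless the last letter is essential), also fails: unique acceptance at length $n+1$ does not restrict to unique acceptance at length $n$, since the truncated machine, with its accept state moved to the penultimate state, may accept many words of length $n$. Hence the simultaneous-drop analysis for $w0$ and $w1$, which is the engine of your extension lemma, rests on an unsupported premise. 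Note also that your heuristic ``$T$-words are plausibly strongly cube-free'' points the wrong way: the paper's Theorem \ref{main} shows strongly cube-free binary words have deficiency \emph{bound} 1, not 0, and the conjecture demands exact equality $A_N(u)=b(\abs{u})$ at every single length --- a far finer property than the asymptotic statement $I_N(\mathbf t)=S_N(\mathbf t)=1$, and one that the paper's square/cube machinery (Theorems \ref{soni}, \ref{nfa9fact}, Lemmas \ref{lyndon:schuetzenberger}, \ref{cubeContainsCube}) only bounds from one side.
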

			\begin{rem}
				We obtained some numerical evidence for Conjecture \ref{evidence}.
				For instance, we found that there are 108 binary words of length 18 having hereditary deficiency 0.
			\end{rem}

\def\cprime{$'$}
\begin{thebibliography}{10}

\bibitem{Calude}
Cristian~S. Calude, Kai Salomaa, and Tania~K. Roblot.
\newblock Finite state complexity.
\newblock {\em Theoret. Comput. Sci.}, 412(41):5668--5677, 2011.

\bibitem{DH}
Rodney~G. Downey and Denis~R. Hirschfeldt.
\newblock {\em Algorithmic randomness and complexity}.
\newblock Theory and Applications of Computability. Springer, New York, 2010.

\bibitem{MR1309124}
Aviezri~S. Fraenkel and R.~Jamie Simpson.
\newblock How many squares must a binary sequence contain?
\newblock {\em Electron. J. Combin.}, 2:Research Paper 2, approx.\ 9 pp.\
  (electronic), 1995.

\bibitem{MR0220693}
A.~O. Gelfond.
\newblock Sur les nombres qui ont des propri\'et\'es additives et
  multiplicatives donn\'ees.
\newblock {\em Acta Arith.}, 13:259--265, 1967/1968.

\bibitem{Hyde}
Kayleigh Hyde.
\newblock {Nondeterministic finite state complexity}.
\newblock Master's thesis, University of Hawaii at Manoa, U.S.A., 2013.
\newblock \url{http://math.hawaii.edu/home/theses/MA_2013_Hyde.pdf}.

\bibitem{MR0162838}
R.~C. Lyndon and M.~P. Sch{\"u}tzenberger.
\newblock The equation {$a^{M}=b^{N}c^{P}$} in a free group.
\newblock {\em Michigan Math. J.}, 9:289--298, 1962.

\bibitem{MR2793891}
Johannes~F. Morgenbesser, Jeffrey Shallit, and Thomas Stoll.
\newblock Thue-{M}orse at multiples of an integer.
\newblock {\em J. Number Theory}, 131(8):1498--1512, 2011.

\bibitem{Shallit:2008:SCF:1434864}
Jeffrey Shallit.
\newblock {\em A Second Course in Formal Languages and Automata Theory}.
\newblock Cambridge University Press, New York, NY, USA, 1 edition, 2008.

\bibitem{MR1897300}
Jeffrey Shallit and Ming-Wei Wang.
\newblock Automatic complexity of strings.
\newblock {\em J. Autom. Lang. Comb.}, 6(4):537--554, 2001.
\newblock 2nd Workshop on Descriptional Complexity of Automata, Grammars and
  Related Structures (London, ON, 2000).

\bibitem{MR787496}
R.~O. Shelton and R.~P. Soni.
\newblock Chains and fixing blocks in irreducible binary sequences.
\newblock {\em Discrete Math.}, 54(1):93--99, 1985.

\bibitem{ThueTwo}
A.~Thue.
\newblock {\"U}ber die gegenseitige {L}age gleicher {T}eile gewisser
  {Z}eichenreihen.
\newblock {\em Norske Vid. Skrifter I Mat.-Nat. Kl., Christiania}, 1:1--–67,
  1912.

\end{thebibliography}
	\def\cprime{$'$}

\end{document}